\newtheorem{theorem}{Theorem}
\newtheorem{remark}{Remark}
\newtheorem{proposition}{Proposition}
\newtheorem{lemma}{Lemma}
\newtheorem{corollary}{Corollary}
\newcommand{\R}{\mathbb{R}}
\newcommand{\tr}{\operatorname{tr}}
\newcommand{\bb}{\mathbb}
\newcommand{\diag}{\operatorname{diag}}
\newcommand{\sgn}{\operatorname{sgn}}
\newcommand{\SO}{{\rm SO}}
\newcommand{\so}{\mathfrak{so}}
\newcommand{\ad}{\operatorname{ad}}
\newcommand{\oc}{\Omega^C}
\newcommand{\ob}{\Omega^B}
\renewcommand{\cal}{\mathcal}
\title{\LARGE \bf Joint Actuator-sensor Design for Stochastic Linear Systems
}
\author{Xudong Chen$^*$ 
\thanks{$^*$X. Chen is with the Department of ECEE, University of Colorado, Boulder. Email: xudong.chen@colorado.edu. 
}}
\begin{document}

\maketitle
\thispagestyle{empty}
\pagestyle{empty}

\begin{abstract}
We investigate the joint actuator-sensor design problem for stochastic linear control systems. Specifically, we address the problem of identifying  a pair of sensor and actuator which gives rise to the minimum expected value of a  quadratic cost.   It is well known that for the linear-quadratic-Gaussian (LQG) control problem, the optimal feedback control law can be obtained via the celebrated separation principle. Moreover, if the system is stabilizable and detectable, then the infinite-horizon time-averaged cost exists. But such a cost depends on the placements of the sensor and the actuator. We formulate in the paper the optimization problem about minimizing the time-averaged cost over admissible pairs of actuator and sensor under the constraint that their Euclidean norms are fixed. The problem is non-convex and is in general difficult to solve. We obtain in the paper a gradient descent algorithm (over the set of admissible pairs) which minimizes the time-averaged cost. Moreover, we show that the algorithm can lead to a unique local (and hence global) minimum point under certain special conditions. 
	\end{abstract}


\section{Introduction}
The problem of system design is, roughly speaking, to optimize its intrinsic parameters, such as placement of actuator and/or sensor, so as to either minimize a certain cost function (e.g., energy consumption) or to maximize a certain performance measure (e.g., sensing accuracy). When the system is networked, comprised of several physical entities (such as a swarm of robots or unmanned aerial vehicles), allocation of communication resource can be also considered as an intrinsic parameter. Optimal resource allocation/communication scheduling has also been addressed widely in the literature (see, for example,~\cite{brockett1995stabilization, zhang2006communication, he2006sensor, shi2013optimal, han2017optimal, chen2017optimal}).  

We focus in the paper a joint actuator-sensor design problem for the following stochastic linear system over admissible {\em actuator vector} $b$ and {\em sensor vector} $c$:
\begin{equation*}
\left\{
\begin{array}{l}
dx_t = Ax_t dt + bu(t) dt + dw_t,\\
dy_t = c^{\top} x_t dt + d\nu_t.
\end{array}
\right. 
\end{equation*}
We aim to minimize an infinite horizon time-averaged quadratic cost function: 
$$
\eta_\infty := \lim_{T\to \infty}\bb{E}\left(\frac{1}{T}\int^T_0 (x^\top x + u^2) dt\right) 
$$ 
For fixed $b$ and $c$, this is known as  the {\em linear-quadratic-Gaussian (LQG)} control problem. The optimal feedback control law can be obtained via the separating principle. However, our goal here is {\em not} to reproduce the analysis for deriving such optimal feedback control law. But rather, we assume that such control law has been employed, and we address the problem of how to minimize the cost over the pairs $(b,c)$ under the constraint that $|b|$ and $|c|$, i.e., the Euclidean norms of $b$ and of $c$, are fixed. A precise formulation of the joint actuator-sensor design problem will be given shortly. 

{\em Literature review.} We note here that similar problems of actuator-design or sensor-design (but not jointly) have also been addressed recently. We first refer the reader to~\cite{belabbas2016geometric} for the optimal sensor design problem. A gradient descent algorithm was derived there, which is proven to possess a unique exponentially stable equilibrium for the case where $A$ is Hurwitz and $|c|$ is relatively small.     We also refer the reader to~\cite{chen2017optimalact} for the problem of optimizing the actuator vector $b$ which requires minimal energy to drive the system from an initial condition in the unit sphere to the original in the worst case (with respect to the choice of the initial condition). A complete solution was provided for the case where $A$ is positive definite with distinct eigenvalues.   We further refer to~\cite{chen2014fluid,hiramoto2000optimal,rao1991optimal} for actuator/sensor design problems which are application-specific.     

Amongst other related problems, we mention the actuator and/or sensor selection problem. The problem there is to select a small number of actuators/sensors out of a large discrete set so as to minimize the control energy or to maximize the sensing accuracy.   For example, 
the authors in~\cite{pasqualetti2014controllability} established  lower bounds for control energy for a given selection of actuators. Similar problem, but with the focus on sensing accuracy, was addressed in~\cite{tzoumas2016sensor}.  A key difference between the actuator/sensor design problem and the selection problem is that the solution space of the former is usually a non-convex continuous space while the latter is in general a combinatorial optimization problem. Thus, the techniques and mathematical tools used  in these two classes of problems are quite different. We further note that greedy type of algorithms were widely used in sensor/actuator selection problems. For example, we refer the reader to~\cite{olshevsky2014minimal} for the minimal controllability problem (i.e., the problem of selecting minimal number of variables so that the resulting linear system is controllable), and to~\cite{zhang2017sensor} for the sensor selection problem for Kalman filtering.

{\em Outline of contribution and organization of the paper.} The contribution of the work is the following: First, we formulate the joint actuator-sensor problem in Section~II. In particular, we provide an explicit expression of the cost function and identify the solution space as a coadjoint orbit equipped with the so-called normal metric. We then derive in Section~III the gradient flow over the solution space with respect to the given metric. We also provide analytical results about the gradient flow. In particular, we characterize conditions for a point in the solution space to be an equilibrium of the gradient flow. To illustrate the type of analysis one needs to carry out, we focus on a special class of linear dynamics where the system matrix $A$ is negative definite and $|b|, |c|$ are relatively small. We show that in such case, there is a unique stable equilibrium of the associated gradient flow. In particular, the optimal actuator vector and sensor vector are aligned with the eigenvector of the matrix $A$ with respect to its largest eigenvalue. These results, as well as the analysis, are given in Section IV. We provide conclusions at the end.

\section{Preliminaries and problem formulation}

We formulate here the joint actuator-sensor design problem. 
To start, we first have a few preliminaries about the  classic linear-quadratic-Gaussian (LQG) control problem.
\subsection{Preliminaries about LQG control}

Consider a continuous-time linear stochastic system with a continuous-time measurement output: 
\begin{equation}\label{eq:controlsystem}
\left\{
\begin{array}{l}
dx_t = Ax_t dt + bu(t) dt + dw_t\\
dy_t = c^{\top} x_t dt + d\nu_t
\end{array}
\right. 
\end{equation}
where $x_t\in \R^n$ is the state, $u(t)\in \R$ is the control input, $y_t\in \R$ is the measurement output, and $w_t\in \R^n$, $\nu_t\in \R$ are independent standard Wiener processes. 
We call the vectors $b, c\in \R^n$ the {\em actuator and sensor vectors}, respectively.  Next, consider the expected value of a quadratic cost function:    
$$
\eta_T := \bb{E}\left(\frac{1}{T}\int^T_0 (x^\top x + u^2) dt\right) 
$$

The so-called {\em LQG control problem} is about finding an optimal feedback control law $u^*(t)$ which minimizes the above cost. It is well known that 
the optimal control problem can be solved via the celebrated separation principle: Let $K(t)$ and $\Sigma(t)$ be two {\em differential Riccati equations} defined as follows:
\begin{equation}\label{eq:differentialequations}
\left\{
\begin{array}{l}
\dot K(t) = -A^{\top} K(t) - K(t) A + K(t)bb^{\top}K(t) - I,   \\
\dot \Sigma(t) = A \Sigma(t) + \Sigma(t) A^{\top} - \Sigma(t) cc^{\top} \Sigma(t) + I, 
\end{array}
\right. 
\end{equation}
where the boundary conditions are specified by $K(T) = 0$ and $\Sigma(0)$ is the covariance matrix of $x_0$. Then, an optimal feedback control law $u^*(t)$ is given by 
$$
u^*(t):= -b^\top K(t) \hat x_t,
$$ 
where $\hat x_t$ is the {\em minimum  mean-squared-error estimate} of $x_t$, which is given by (see~\cite{kalman1961new}) 
\begin{multline*}
d\hat x_t = A \hat x_t dt + b u^*(t) dt + \Sigma(t) c (dy_t - c^\top x_t dt) \\
 = (A  - b^\top K(t) ) \hat x_t  dt + \Sigma(t) c (dy_t - c^\top x_t dt).
\end{multline*}
Moreover, under such an optimal control, the (minimized) cost function is given by  
\begin{equation}\label{eq:costfunction}
\eta^*_T := \frac{1}{T}\int^T_{0}\tr( K(t) bb^\top K(t)\Sigma(t) + K(t))dt.
\end{equation}
where $\tr(\cdot)$ denotes the trace of a matrix. 

Further, if the control system~\eqref{eq:controlsystem} is {\em stabilizable}  and {\em detectable},  then the steady-states of the differential Riccati equations~\eqref{eq:differentialequations} exist, which are the unique positive semi-definite (PSD) solutions to the following {\em algebraic Riccati equations (AREs)}:
\begin{equation}\label{eq:algebraicriccati}
\left\{
\begin{array}{l}
 A^{\top} K + K A - Kbb^{\top}K + I = 0,   \\
A \Sigma + \Sigma A^{\top} - \Sigma cc^{\top} \Sigma + I = 0. 
\end{array}
\right. 
\end{equation} 
It follows that the limit of $\eta^*_T$ also exists, which we state in the following Lemma:

\begin{lemma}
If system~\eqref{eq:controlsystem} is stabilizable and detectable, then 
\begin{equation}\label{eq:performance}
\Phi:= \lim_{T\to\infty} \eta^*_T = \tr(A^{\top}K\Sigma + \Sigma KA + K + \Sigma) 
\end{equation}
where $K$ and $\Sigma$ are the PSD solutions to~\eqref{eq:algebraicriccati}. 
\end{lemma}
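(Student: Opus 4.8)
The plan is to start from the closed-form finite-horizon cost~\eqref{eq:costfunction} and pass to the limit, separating the argument into an analytic part (convergence of the Riccati trajectories) and a purely algebraic part (rewriting the steady-state integrand into the claimed form). First I would invoke the standard consequence of stabilizability and detectability: the control Riccati solution $K(t)$, integrated backward from $K(T)=0$, and the filter Riccati solution $\Sigma(t)$, integrated forward from $\Sigma(0)$, both converge to the stabilizing PSD solutions $K$ and $\Sigma$ of the AREs~\eqref{eq:algebraicriccati}. The important quantitative refinement is that these convergences are \emph{exponential and uniform}: writing $K_T(t)$ for the horizon-$T$ control solution, one has $\|K_T(T-s)-K\|\le M e^{-\alpha s}$ and $\|\Sigma(s)-\Sigma\|\le M e^{-\alpha s}$ for constants $M,\alpha>0$ that do not depend on $T$.

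Given this, I would argue that the two transient layers contribute nothing to the time average. In the bulk of $[0,T]$ the integrand $\tr(K_T(t)bb^\top K_T(t)\Sigma(t)+K_T(t))$ is within $O(e^{-\alpha\min(t,\,T-t)})$ of the constant $\tr(Kbb^\top K\Sigma+K)$; since the Riccati trajectories are uniformly bounded, the integrand is uniformly bounded, and the integral of the deviation over the two boundary layers is $O(1)$ in $T$. Dividing by $T$ and letting $T\to\infty$ therefore yields
\begin{equation*}
\Phi=\lim_{T\to\infty}\eta^*_T=\tr(Kbb^\top K\Sigma+K).
\end{equation*}

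It then remains to reconcile this with the stated expression, which is the algebraic part. Using the first ARE in~\eqref{eq:algebraicriccati} to substitute $Kbb^\top K=A^\top K+KA+I$ gives $\tr(Kbb^\top K\Sigma+K)=\tr(A^\top K\Sigma+KA\Sigma+\Sigma+K)$, and the cyclic property of the trace turns $\tr(KA\Sigma)$ into $\tr(\Sigma KA)$, producing exactly $\tr(A^\top K\Sigma+\Sigma KA+K+\Sigma)$ as in~\eqref{eq:performance}.

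The main obstacle I anticipate is the analytic step rather than the algebra: one must justify interchanging the limit with the time average and, in particular, establish the uniform-in-$T$ exponential decay of the control-Riccati transient near the terminal time. This hinges on the fact that the closed-loop matrix $A-bb^\top K$ is Hurwitz (the stabilizing property guaranteed by detectability and stabilizability), which controls the linearization of the Riccati flow about the equilibrium $K$ and yields the required uniform bound; the filter side is symmetric, governed by the Hurwitz matrix $A-\Sigma cc^\top$.
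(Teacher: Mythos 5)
Your proof is correct and follows exactly the route the paper implicitly relies on: the paper states this lemma without proof, presenting it as an immediate consequence of the convergence of the differential Riccati solutions to the stabilizing ARE solutions, followed by the substitution $Kbb^\top K = A^\top K + KA + I$ and cyclicity of the trace to pass from $\tr(Kbb^\top K\Sigma + K)$ to $\tr(A^\top K\Sigma + \Sigma KA + K + \Sigma)$. Your write-up simply supplies the standard analytic details (uniform exponential decay of the two Riccati transients and the resulting $O(1)$ boundary-layer contribution) that the paper leaves unstated.
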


\subsection{Problem formulation: joint actuator-sensor design.} Note that the value of $\Phi$ defined in~\eqref{eq:performance} depends on the actuator and sensor vectors $b$ and $c$, via the solutions $K$ and $\Sigma$ to the AREs~\eqref{eq:algebraicriccati}. We will thus write $K(b)$, $\Sigma(c)$, and $\Phi(b,c)$ on occasions to indicate  such dependence explicitly. The optimal joint actuator-sensor design problem we address in the paper is an optimization problem about minimizing the function $\Phi(b,c)$ over all admissible pairs  $(b,c)$. 

To proceed, we first note the fact that $\Phi(b,c)$ decreases if we increase the norms of $b$ and $c$. Specifically, we fix a pair of actuator and sensor vectors $(b,c)$, with $(A,b)$ stabilizable and $(A,c)$ detectable. With slight abuse of notation, we denote by $K(r)$ and $\Sigma(s)$, for $r, s> 0$,  the PSD solutions to the following AREs:
\begin{equation}\label{eq:parametrize}
\left\{
\begin{array}{l}
 A^{\top} K(r) + K(r) A - r K(r)bb^{\top}K(r) - I = 0,   \\
A \Sigma(s) + \Sigma(s) A^{\top} - s\Sigma(s) cc^{\top} \Sigma (s)+ I = 0. 
\end{array}
\right. 
\end{equation}
 Then, we have $$dK(r)/dr \le 0 \quad \mbox{ and } \quad d\Sigma(s)/ ds \le 0.$$ 
 We refer to~\cite{wredenhagen1993curvature} or Prop. 3 of~\cite{chen2017optimal} for a prove of the above inequalities. We also gave in~\cite{chen2017optimal} generic conditions on when the inequalities are strict. If we let $\Phi(r, s)$ be defined as in~\eqref{eq:performance}, with $K$ and $\Sigma$ replace by $K(r)$ and $\Sigma(s)$, then we have the following fact:
 
 \begin{lemma}\label{lem:fixnorm}
 For fixed $b$ and $c$ with $(A,b)$ stabilizable and $(A,c)$ detectable, we have 
 $$\partial \Phi(r,s)/\partial r \le 0, \qquad \partial \Phi(r,s)/\partial s \le 0.$$
 The inequalities are strict if $K'(r) < 0$ and $\Sigma'(s) < 0$.  
 \end{lemma}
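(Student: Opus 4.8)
The plan is to convert each inequality into a statement about the sign of a single scalar quadratic form, by using the algebraic Riccati equations~\eqref{eq:parametrize} to eliminate the cross terms in $\Phi$ \emph{before} differentiating. Write $\Phi = \tr(A^\top K\Sigma) + \tr(\Sigma KA) + \tr(K) + \tr(\Sigma)$ and note, via the cyclic invariance of the trace, that the first two terms can be grouped in two ways, either as $\tr\big(K(A\Sigma+\Sigma A^\top)\big)$ or as $\tr\big((A^\top K+KA)\Sigma\big)$. Each grouping exposes exactly one of the two Riccati quadratics, so substituting the matching ARE isolates the dependence on a single parameter ($r$ or $s$) into one trace term carrying a manifestly positive semidefinite factor.

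First I would treat $\partial\Phi/\partial r\le 0$. Using the grouping $\tr\big(K(A\Sigma+\Sigma A^\top)\big)$ and the $\Sigma$-equation of~\eqref{eq:parametrize}, which does not involve $r$, to replace $A\Sigma+\Sigma A^\top$, the substitution produces a $-\tr(K)$ that cancels the $\tr(K)$ already present, leaving the closed form $\Phi(r,s) = s\,\tr\big(K(r)\,\Sigma(s)cc^\top\Sigma(s)\big) + \tr(\Sigma(s))$. Since this is an identity in $(r,s)$ and only $K(r)$ carries $r$-dependence, differentiating gives
\[
\frac{\partial\Phi}{\partial r} = s\,\tr\big(K'(r)\,\Sigma cc^\top\Sigma\big) = s\,(\Sigma c)^\top K'(r)\,(\Sigma c).
\]
Because $s>0$ and $K'(r)\le 0$ by the matrix inequality quoted above, the right-hand side is $\le 0$, and it is strictly negative when $K'(r)<0$ and $\Sigma c\neq 0$.

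The bound $\partial\Phi/\partial s\le 0$ follows by the symmetric computation. Here I would use the other grouping $\tr\big((A^\top K+KA)\Sigma\big)$ together with the $K$-equation of~\eqref{eq:parametrize} to write $\Phi$ in a form where $\Sigma(s)$ appears linearly against the positive semidefinite matrix $Kbb^\top K$, namely $\Phi = r\,\tr\big(Kbb^\top K\,\Sigma\big) + \tr(K)$, with a possible additional summand proportional to $\tr(\Sigma)$ depending on the sign of the identity term in the $K$-ARE. Differentiating in $s$ and invoking $\Sigma'(s)\le 0$, each surviving term has the form $r\,(Kb)^\top\Sigma'(s)(Kb)\le 0$ or $\tr(\Sigma'(s))\le 0$, so $\partial\Phi/\partial s\le 0$, with strictness when $\Sigma'(s)<0$ and $Kb\neq 0$.

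The obstacles here are only bookkeeping ones. The key thing to get right is that the ARE substitution must precede differentiation and must use the \emph{right} equation: to differentiate in $r$ I substitute the $r$-free $\Sigma$-ARE so that $\Sigma$ is genuinely held constant, and to differentiate in $s$ I substitute the $s$-free $K$-ARE. One must also verify nondegeneracy of the resulting quadratic forms to obtain the strict inequalities; $\Sigma c\neq 0$ and $Kb\neq 0$ hold, for instance, whenever the pairs are observable/controllable so that $\Sigma$ and $K$ are positive definite. I expect nothing deeper to arise: the nonpositivity is robust to the sign of the $I$ term in the $K$-ARE of~\eqref{eq:parametrize}, since that term contributes at most an extra $\tr(\Sigma'(s))\le 0$ (or cancels outright).
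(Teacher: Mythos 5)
Your proposal is correct and is essentially the paper's own argument: both reduce $\partial\Phi/\partial r$ to $s\,\tr\big(\Sigma cc^\top\Sigma\, K'(r)\big)\le 0$ via the $\Sigma$-ARE (you substitute before differentiating, the paper after, which is immaterial), and symmetrically for $\partial\Phi/\partial s$. The only minor divergence is in the strictness step, where the paper deduces $\Sigma c\neq 0$ directly from $\Sigma'(s)<0$ using the integral formula for $\Sigma'$, whereas you invoke positive definiteness of $K$ and $\Sigma$; both routes are valid.
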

 
 \begin{proof}
We focus only on the proof for $\partial \Phi(r, s) /\partial r \le 0$. By symmetry, the same argument can be applied to establish $\partial \Phi(r, s) /\partial s \le 0$.      
 For convenience, we let $K'(r) := d K(r)/dr$. We obtain by computation
 \begin{multline*}
 \partial \Phi(r,s)/\partial r  = \tr((A\Sigma(s) + \Sigma(s)A^\top + I)K'(r)) \\
 = s\tr(\Sigma(s) cc^\top \Sigma(s) K'(r)) \le 0
 \end{multline*}
 where the second equality comes from~\eqref{eq:parametrize}, and the last inequality comes from the fact that $\tr(PQ) \le 0$ for $P\ge 0$ and $Q \le 0$. Here, $P := \Sigma(s) cc^\top \Sigma(s)$ and $Q := K'(r)$. The inequalities are strict if $Q <0 $ and $P \neq 0$. Note that $P \neq 0$ if and only if $\Sigma(s) c \neq 0$.  Also, note that by computation (see, for example,~\cite{chen2017optimal})
 $$
 \Sigma'(s) = -\int^\infty_0 e^{(A -s \Sigma cc^\top)t} \Sigma cc^\top \Sigma e^{(A^\top -s  cc^\top \Sigma)t} dt,
 $$ 
 where we omit the argument $s$ in $\Sigma(s)$ in the above expression. The integral exists because $(A, c)$ is detectable, and hence $(A - s\Sigma(s) cc^\top)$ is Hurwitz.  It then follows that if $\Sigma'(s) < 0$, then $\Sigma(s) c\neq 0$, and hence $P \neq 0$. 
  \end{proof}

The statement of Lemma~\ref{lem:fixnorm} is not surprising. Indeed, the Euclidean norms of $b$ and $c$ can be thought as the actuation gain (e.g., specific impulse for spacecraft/rocket propulsion) and the signal-to-noise ratio (SNR), respectively. Increasing the actuation gain and/or the SNR yields a better performance, i.e., a smaller values of $\Phi$. We thus assume in the sequel that $|b|^2 = \epsilon$ and $|c|^2 = \delta$ are fixed positive numbers. We note that such an assumption is natural in system design as the actuation gain of the actuator and the SNR of the sensor are given, but  their placements/embedding in the control system will matter for the performance measure. 
 
With the preliminaries above, we now formulate the joint actuator-sensor design problem as follows:
\vspace{5pt}

\noindent
{\bf Joint actuator-sensor design problem.} Find a pair $(b,c)$ which minimizes 
$\Phi(b,c)$ under the constraint that $|b|^2 = \epsilon$ and $|c|^2 = \delta$ with $\epsilon, \delta > 0$.
\vspace{5pt}
 
We note here that unlike the LQG control problem, the optimal feedback control $u^*(t)$ and the minimum mean-squared-error estimate $\hat x(t)$ can be solved ``independently'', the arguments $b$ and $c$ in $\Phi(b,c)$ are coupled---they are coupled in the term $\tr(A^\top K\Sigma + \Sigma K A)$. Hence, the joint actuator-sensor design problem cannot be solved by dividing it into  subproblems of actuator and sensor design.  
 
\section{Double bracket flow as \\ a gradient descent algorithm}
We derive in the section a gradient descent algorithm which minimizes the potential function $\Phi(b,c)$. To introduce such a gradient descent algorithm, we need to first identify the solution space, and then impose a metric on the solution space. This is done in the first Subsection. We note here that similar computation and argument has been carried out in~\cite{belabbas2016geometric}. We thus omit a few computational details.       

\subsection{Solution space and normal metric}

{\em 1 Solution space.} To proceed, we first identify the underlying solution space. First, note that the PSD solutions $K(b)$ and $\Sigma(c)$ to~\eqref{eq:algebraicriccati} (and hence $\Phi(b,c)$) depend on $b$ and $c$ in a way such that they depend only on $bb^\top$ and $cc^\top$. Said more explicitly, if we normalize $bb^\top$ and $cc^\top$ as
\begin{equation}\label{eq:normailization}
B: =   bb^\top/\epsilon, \qquad C:= cc^\top / \delta 
\end{equation}
so that $\tr(B) = \tr(C) =1$, then we can re-write~\eqref{eq:algebraicriccati} as 
\begin{equation}\label{eq:algebraicriccati}
\left\{
\begin{array}{l}
 A^{\top} K + K A - \epsilon KBK + I = 0,   \\
A \Sigma + \Sigma A^{\top} - \delta \Sigma C  \Sigma + I = 0. 
\end{array}
\right. 
\end{equation} 
For the above reason, we can write $K(B)$ and $\Sigma(C)$, and hence $\Phi(B,C)$ without any ambiguity. The collection of such pair $(B,C)$ will be the solution space. Specifically, we let $\{e_i\}^n_{i = 1}$ be the standard basis of $\R^n$, and $$\SO(n):= \{\Theta \in \R^{n\times n} \mid \Theta^\top \Theta = I\}$$ be the special orthogonal group.  We then let 
\begin{equation}\label{eq:solutionspace}
X:= \left\{ \Theta e_i e_i^\top \Theta \mid \Theta \in \SO(n) \right\},
\end{equation}
to which the matrices $B$ and $C$ belong.  
The space $X$ is also known as a coadjoint orbit. Note that the above definition does not depend on the choice of $e_i$ as $\Theta$ acts transitively on the unit sphere $S^{n-1}$. Since $B,C\in X$, the solution space is then the product space $X^2:= X\times X$.

{\em 2. Normal metric}. A metric (tensor) $g$ on the space $X$ is such that at each point $B\in X$, $g_B$ is a positive definite bilinear form on $T_BX$ (the tangent space of $X$ at~$B$), and varies smoothly on $B$. Equipped with a metric $g$, $(X,g)$ is then a Riemannian manifold. 

For the coadjoint orbit $X$, there is a canonical metric, known as the normal metric, which will be characterized below. Let $$\so(n):= \{\Omega \in \R^{n\times n}\mid \Omega^\top + \Omega = 0 \}$$ be the set of skew-symmetric matrices.  Denote by $[\cdot,\cdot]$ the commutator of matrices. Then, the tangent space of $X$ at  a a matrix $B\in X$ is given by
\begin{equation}\label{eq:tangentspace}
T_BX= \{[B, \Omega] \mid  \Omega \in \so(n)\}. 
\end{equation}
Fix the matrix $B$, and let $\ad_B(\cdot) := [B, \cdot]$ be the linear map from $\so(n)$ to $T_B X$. The linear map is onto, and we denote by ${\rm ker_B}$
 the kernel of $\ad_B$. If one imposes the inner product on $\so(n)$ by $-\tr(\Omega\Omega')$, then the subspace of $\so(n)$ perpendicular to ${\rm ker}_B$ is defined. We denote it by ${\rm ker}^\perp_B$. So, ${\rm ker}_B\oplus {\rm ker}^\perp_B  = \so(n)$, and $\tr({\rm ker}_B {\rm ker}^\perp_B) = 0$. 
 
 It then follows that $\ad_B$, when restricted to ${\rm ker}^\perp_B$, is a linear isomorphism between ${\rm ker}^\perp_B$ and $T_BX$. We can thus re-write~\eqref{eq:tangentspace} as
 \begin{equation}\label{eq:tangentspace1}
 T_BX= \{\ad_B\Omega \mid  \Omega \in {\rm ker}^\perp_B\}.
 \end{equation}
The normal metric (tensor)  on $X$ is then defined as follows: 
\begin{equation}\label{eq:normalmetric}
g_B(\ad_B \Omega, \ad_B\Omega') := -\tr(\Omega \Omega')
\end{equation}
for $\Omega, \Omega'\in {\rm ker}^\perp_B$.

In the case here, we have $X^2$ the solution space. One can simply extend the normal metric to the product space $X^2$ as 
\begin{multline}\label{eq:normalmetric2}
g_{(B,C)}((\ad_B\Omega_B, \ad_C\Omega_C ),(\ad_B\Omega'_B, \ad_C\Omega'_C ) ) \\ := -\tr(\Omega_B\Omega'_B) -\tr(\Omega_C\Omega'_C) 
\end{multline}
for $\Omega_B, \Omega'_B\in \ker_B^\perp$ and $\Omega_C, \Omega'_C\in \ker_C^\perp$.

\subsection{Gradient descent algorithm} We derived here the gradient flow of $\Phi(B,C)$ over the solution space $X^2$ with respect to the normal metric~$g$ defined in~\eqref{eq:normalmetric2}. Denote by $\nabla \Phi$ the gradient of $\Phi$, determined by the following defining condition:
\begin{equation}\label{eq:defininggradient}
g_{(B,C)}(\nabla \Phi(B,C), v) = v \cdot \Phi(B,C), \quad \forall v \in T_{(B,C)} X^2, 
\end{equation}
where $v\cdot \Phi(B,C)$ is the directional derivative along~$v$. We are now in the a position to state the first main result:

\begin{theorem}\label{thm:mainresult}
Let the potential function $\Phi(B,C)$ be defined in~\eqref{eq:performance} over the solution space $X^2$ for $X$ defined in~\eqref{eq:solutionspace}. Let the metric $g$ be defined in~\eqref{eq:normalmetric2}. Then, the gradient flow 
$$(\dot B, \dot C) = - \nabla \Phi(B,C)$$ 
which minimizes $\Phi$ is given as follows:
\begin{equation}\label{eq:gradientflow}
\left\{
\begin{array}{l}
\dot B = \epsilon\delta [B,[B, KMK]], \\
\dot C = \epsilon \delta[C,[C,\Sigma N \Sigma ]],
\end{array}
\right. 
\end{equation}
where $K, \Sigma$ are the PSD solutions to the AREs:
\begin{equation*}
\left\{
\begin{array}{l}
 A^{\top} K + K A - \epsilon KBK + I = 0,   \\
A \Sigma + \Sigma A^{\top} - \delta \Sigma C  \Sigma + I = 0,
\end{array}
\right.  
\end{equation*}
and $M, N$ are the solutions to the Lyapunov equations:
\begin{equation}\label{eq:LyaEquation}
\left\{
\begin{array}{l}
(A - \epsilon BK) M + M (A - \epsilon BK)^{\top} +  \Sigma C \Sigma = 0,  \\
(A - \delta\Sigma C)^{\top} N + N (A - \delta\Sigma C) + K B K = 0.  
\end{array}
\right. 
\end{equation}\,
\end{theorem}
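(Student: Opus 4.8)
The plan is to compute the directional derivative $v\cdot\Phi(B,C)$ for an arbitrary tangent vector $v=(\ad_B\Omega_B,\ad_C\Omega_C)$ and then to read off $\nabla\Phi$ from the defining condition~\eqref{eq:defininggradient} together with the normal metric~\eqref{eq:normalmetric2}. Because the two factors of $X^2$ enter $\Phi$ symmetrically, I would treat the $B$-direction in detail and obtain the $C$-direction by the analogous argument. Writing $K'$ for the derivative of $K=K(B)$ along $\ad_B\Omega_B$, I first differentiate the ARE $A^\top K + KA - \epsilon KBK + I = 0$ in this direction, which yields the Lyapunov relation
\begin{equation*}
(A-\epsilon BK)^\top K' + K'(A-\epsilon BK) = \epsilon K(\ad_B\Omega_B)K,
\end{equation*}
whose operator is invertible since the closed-loop matrix $A-\epsilon BK$ is Hurwitz. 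Differentiating $\Phi=\tr(A^\top K\Sigma+\Sigma KA+K+\Sigma)$ and collecting the $K'$-terms, the cyclic property of the trace gives $\tr\!\big(K'(A\Sigma+\Sigma A^\top+I)\big)$, and substituting the $\Sigma$-ARE $A\Sigma+\Sigma A^\top+I=\delta\,\Sigma C\Sigma$ collapses this to $\delta\,\tr(K'\,\Sigma C\Sigma)$.

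The crux is then to eliminate the implicit quantity $K'$, which depends on the direction $\ad_B\Omega_B$ only through the Lyapunov equation above. Rather than inverting that equation, I introduce the adjoint (costate) variable $M$ solving $(A-\epsilon BK)M+M(A-\epsilon BK)^\top+\Sigma C\Sigma=0$; this is exactly the Lyapunov equation~\eqref{eq:LyaEquation}, and $M$ is symmetric since the inhomogeneity $\Sigma C\Sigma$ is symmetric and the Lyapunov operator preserves symmetry. Using $\Sigma C\Sigma=-\big[(A-\epsilon BK)M+M(A-\epsilon BK)^\top\big]$ inside $\tr(K'\Sigma C\Sigma)$ and regrouping by the cyclic property, the expression becomes $-\tr\!\big(M[\,K'(A-\epsilon BK)+(A-\epsilon BK)^\top K'\,]\big)$; but the bracket is precisely the left-hand side of the differentiated ARE, namely $\epsilon K(\ad_B\Omega_B)K$. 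This self-adjointness of the trace pairing against the Lyapunov operator turns the $K'$-dependence into an explicit dependence on the direction: $\tr(K'\Sigma C\Sigma)=-\epsilon\,\tr\!\big(KMK\,\ad_B\Omega_B\big)$, so the $B$-part of the directional derivative equals $-\epsilon\delta\,\tr\!\big(KMK\,[B,\Omega_B]\big)$.

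Finally I would convert this into the form dictated by the metric. Since $B$ and $KMK$ are symmetric, the commutator identity $\tr\big(KMK\,[B,\Omega_B]\big)=-\tr\big([B,KMK]\,\Omega_B\big)$ holds, and $[B,KMK]$ is skew-symmetric, so the $B$-part of $v\cdot\Phi$ equals $\epsilon\delta\,\tr\big([B,KMK]\,\Omega_B\big)=-\tr\big(\Xi_B\,\Omega_B\big)$ with $\Xi_B=-\epsilon\delta\,[B,KMK]$. Comparing with $g_{(B,C)}(\nabla\Phi,v)=-\tr(\Xi_B\Omega_B)-\tr(\Xi_C\Omega_C)$ identifies the $B$-component of $\nabla\Phi$ as $\ad_B\Xi_B=-\epsilon\delta\,[B,[B,KMK]]$, whence $\dot B=-\nabla\Phi=\epsilon\delta\,[B,[B,KMK]]$; the $C$-component follows verbatim with $N$ in place of $M$. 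I expect the main obstacle to be the bookkeeping in the adjoint step: one must verify that the duality between the differentiated ARE and the Lyapunov equation for $M$ holds exactly (including the Hurwitz hypotheses guaranteeing existence of $K'$ and $M$), and one should note that no explicit projection onto $\ker_B^\perp$ is needed, since $\ad_B$ annihilates $\ker_B$ and the double-bracket form automatically delivers a legitimate tangent vector.
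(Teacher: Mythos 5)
Your proposal is correct and follows essentially the same route as the paper's proof: differentiate the AREs to obtain Lyapunov equations for $K'$ and $\Sigma'$, reduce $v\cdot\Phi$ to $\delta\,\tr(K'\,\Sigma C\Sigma)+\epsilon\,\tr(\Sigma'\,KBK)$, introduce $M$ and $N$ through their Lyapunov equations, and match against the normal metric. The only cosmetic difference is that the paper eliminates $K'$ and $\Sigma'$ via their explicit integral representations combined with those of $M$ and $N$, whereas you invoke the self-adjointness of the trace pairing against the Lyapunov operator directly; likewise, your closing observation that no projection onto $\ker_B^\perp$ is needed (since $\ad_B$ annihilates $\ker_B$) is a valid substitute for the paper's explicit verification that $[B,KMK]\in\ker_B^\perp$.
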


\begin{remark} Note that one can re-scale the gradient flow~\eqref{eq:gradientflow} by dividing $\epsilon \delta$:
\begin{equation}\label{model0}
\left\{
\begin{array}{l}
\dot B =  [B,[B, KMK]],\\
\dot C =  [C,[C,\Sigma N \Sigma ]],
\end{array}
\right. 
\end{equation}
In particular, the two dynamics share the same set of equilibria. 
\end{remark}

\begin{remark}\label{rmk:trueforBC}
We also note that if the initial conditions $B(0) = b(0)b(0)^\top/\epsilon$ and $C(0) = c(0)c(0)^\top/\delta$ are chosen such that $(A, b(0))$ is stabilizable and $(A, c(0))$ is detectable (which is generically true), then $(A, b(t))$ is stabilizable and $(A, c(t))$ is detectable for all $t\ge 0$, where $b(t)$ and $c(t)$ are such that $B(t) = b(t)b(t)^\top/\epsilon$ and $C(t) = c(t)c(t)^\top/\delta$. In particular, the matrices $M$ and $N$ in~\eqref{eq:LyaEquation} can be expressed as follows:
\begin{equation}\label{eq:integralformulaforMN}
\left\{
\begin{array}{l}
M = \displaystyle \int^\infty_0 e^{(A - \epsilon BK)t}\, \Sigma C\Sigma \, e^{(A - \epsilon BK)^\top t}dt, \vspace{3pt} \\
N = \displaystyle \int^\infty_0 e^{(A - \delta \Sigma C)^\top t}\, K B K \, e^{(A - \delta \Sigma C) t}dt. 
\end{array}
\right. 
\end{equation}\,
\end{remark}

We provide below a proof of Theorem~\ref{thm:mainresult}. 

\begin{proof}[Proof of Theorem~\ref{thm:mainresult}]
The proof follows from computation by matching the two sides of~\eqref{eq:defininggradient}. Fix a pair $(B,C)$ in the solution space $X^2$. Pick a tangent vector $v\in T_{(B,C)}X^2$ and we write $v = (v_B, v_C) $,  with $v_B = \ad_B \Omega_B$ and $v_C = \ad_C \Omega_C$ for $\Omega_B\in \ker_B^\perp$ and $\Omega_C \in \ker_C^\perp$. 

By computation, the directional derivative $v\cdot \Phi(B,C)$ is given by  
\begin{equation}\label{eq:ss}
v\cdot \Phi= \delta\tr(\Sigma C \Sigma K' ) + \epsilon \tr(K B K \Sigma'),
\end{equation}
where $K' := v_B \cdot K$ and $ \Sigma' := v_C\cdot \Sigma$ are directional derivatives,  which satisfy the following Lyapunov equations:
$$
\left\{
\begin{array}{l}
(A- \epsilon BK)^\top K' +  K' (A - \epsilon BK) -\epsilon  K [B, \Omega_B] K = 0,  \\
(A - \delta\Sigma C) \Sigma' + \Sigma' (A - \delta\Sigma C)^\top -\delta \Sigma [C, \Omega_C] \Sigma = 0.  
\end{array}
\right. 
$$
Since $(A,b)$ is stabilizable and $(A, c)$ is detectable, $(A - \epsilon BK)$ and $(A - \delta \Sigma C)$ are Hurwitz matrices, and hence $K'$ and $\Sigma'$ can be expressed as 
$$
\left\{
\begin{array}{l}
K' = -\epsilon\displaystyle \int^\infty_{0} e^{(A- \epsilon BK)^\top t} \, K [B,\Omega_B] K \, e^{(A- \epsilon BK) t} dt, \vspace{3pt}\\
\Sigma' = - \delta \displaystyle \int^\infty_0  e^{(A- \delta \Sigma C) t} \, \Sigma [C,\Omega_C] \Sigma \, e^{(A- \delta \Sigma C )^\top t} dt, 
\end{array}
\right.
$$
Combining the above integral formula with~\eqref{eq:ss} and using the fact that $\tr(A[B,C]) =\tr([A, B]C)$ for any square matrices $A$, $B$, and $C$, we obtain
\begin{equation}\label{eq:matching1}
v\cdot \Phi = \epsilon \delta (\tr( [B, KMK] \Omega_B ) + \tr([C, \Sigma N\Sigma] \Omega_C))
\end{equation}
with $M$ and $N$ given by~\eqref{eq:integralformulaforMN}, or equivalently by~\eqref{eq:LyaEquation}.

Now, for the left hand side of~\eqref{eq:defininggradient}, we first note that $\nabla\Phi(B,C)$ must take the form 
$$
\nabla \Phi(B,C) = ([B, \Omega^*_B],   [C, \Omega^*_C])
$$
with $\Omega^*_B\in \ker^\perp_B$ and $\Omega^*_C\in \ker^\perp_C$. This directly follows from~\eqref{eq:tangentspace1}. Thus, the gradient $\nabla\Phi$ will be determined if $\Omega^*_B$ and $\Omega^*_C$ are known to us. 
By the definition of the normal metric $g$ in~\eqref{eq:normalmetric2},
we have
\begin{equation}\label{eq:matching2}
g_{(B,C)}(\nabla\Phi(B,C), v) = - \tr(\Omega^*_B\Omega_B) - \tr(\Omega^*_C\Omega_C).
\end{equation}
By matching~\eqref{eq:matching1} and~\eqref{eq:matching1}, we obtain 
\begin{equation}\label{eq:omegaBomegaC}
\Omega^*_B = - [B, KMK], \quad \Omega^*_C = - [C, \Sigma N \Sigma]
\end{equation}
provided that $[B, KMK]$ and $[C,\Sigma N \Sigma]$ belong to $\ker^\perp_B$ and $\ker^\perp_C$, respectively. Note that if~\eqref{eq:omegaBomegaC} holds, then the proof is done.  

We now show that $[B, KMK]\in\ker^\perp_B$. The same argument can be applied to establish $[C,\Sigma N \Sigma]\in \ker^\perp_C$. Pick any $\Omega \in \ker_B$, i.e., $[B,\Omega] = 0$, we need to show that 
$\tr(\Omega [B, KMK]) = 0$. This follows because
$$
\tr(\Omega [B, KMK]) = -\tr([B,\Omega]KMK) = 0,
$$
which completes the proof.
\end{proof}

\section{Analysis of the gradient descent algorithm for symmetric, stable systems}
We call a pair $(B,C)\in X^2$ an {\em equilibrium point} of the gradient  $\nabla\Phi$ if  $\nabla\Phi(B,C) = 0$. Equivalently, an equilibrium point is also a {\em critical point} of the potential $\Phi$. An optimal solution $(B,C)$, i.e., a global minimum point of $\Phi$, is necessarily an equilibrium point of the gradient flow. Thus, characterizing equilibria (and especially, stable equilibria) is crucial.    Although the gradient descent algorithm~\eqref{eq:gradientflow} (or the re-scaled version~\eqref{model0}) can be applied to arbitrary control systems, the analysis of its equilibria can be quite difficult in general.   
In the section, we focus on a special class of control systems---these systems are such that the system matrix $A$ is negative definite with distinct eigenvalues, and the Euclidean norms of $|b|$ and $|c|$ are relatively small. 

The goal here is thus to demonstrate the type of analysis one needs to carry out for computing the set of equilibria, and provide insights into the analysis for a general case. 

We note  here that with a few more arguments, 
the results obtained here can be extended to the case where $A$ is negative {\em semi-}definite which, for example, includes the class of (weighted) Laplacian dynamics, i.e., 
\begin{equation}\label{eq:laplaciandynamics}
\left\{
\begin{array}{l}
dx_t = -L x_t dt + bu(t)dt + dw_t, \\
d y_t = c^\top x_t dt + d\nu_t.
\end{array}
\right.
\end{equation}
where $L = [l_{ij}]$ a weighted irreducible Laplacian matrix, i.e., $l_{ij} = l_{ji}\ge 0$ and $\sum^n_{j = 1}l_{ij} = 0$ for all $i = 1,\ldots, n$. But for ease of exposition, we focus only on the case where $A$ is negative definite. For the case where $A$ is unstable, the analysis becomes more subtle, and the details will be discussed in a different paper.

\subsection{The eigenvector problem}
 For given positive numbers $\epsilon$ and $\delta$, we denote by $E_{(\epsilon, \delta)}$ the set of equilibria associated with~\eqref{eq:gradientflow}. Note that the dynamics~\eqref{model0} shares the same set of equilibria with~\eqref{eq:gradientflow}, except for the case where either $\epsilon$ or $\delta$ is zero. Indeed, if $\epsilon\delta = 0$, then $\nabla\Phi(B,C) = 0$ for all $(B,C)$, which does not hold for~\eqref{model0}. We thus have $(0,0)$ the point of singularity. On the other hand, 
 one may treat $E_{(\epsilon, \delta)}$ as the set of equilibria associated with~\eqref{model0}. In this way, $E_{\epsilon, \delta}$ is defined for all nonnegative $\epsilon$ and $\delta$. The benefits for one to do this is the following: Note that~\eqref{model0} depends smoothly on $\epsilon$ and $\delta$ (the dependence is via $K$, $\Sigma$, $M$, and $N$). Thus, by arguments of perturbation, one would expect that the set $E_{(\epsilon, \delta)}$ also varies smoothly over an open neighborhood of $(0,0)$ provided that equilibria satisfy certain non-degenerateness conditions. 

We now characterize conditions for a pair $(B,C)$ to be an equilibrium point. By definition, we have
$$
\left\{
\begin{array}{l}
 [B,[B, KMK]] = 0,\\
\left [C,[C,\Sigma N \Sigma ] \right ] = 0.
\end{array}
\right. 
$$
On the other hand, we have shown in the proof of Theorem~\ref{thm:mainresult} that $[B, KMK]$ and $[C, \Sigma N \Sigma]$ belong to $\ker^\perp_B$ and $\ker^\perp_C$, respectively. Since $\ker^\perp_B$ and $\ker^\perp_C$ are linearly isomorphic to $T_BX$ and $T_C X$, respectively, we have
$$
\left\{
\begin{array}{l}
 [B, KMK] = 0,\\
\left [C,\Sigma N \Sigma \right ]  = 0.
\end{array}
\right. 
$$
Conversely, if the above equations hold, then $(B,C)$ is an equilibrium point of $\nabla\Phi(B,C)$.

It is well known that two matrices commute if and only if they share the same set of eigenvectors. Now let $\bar b, \bar c\in \R^n$ be defined such that $B = \bar b\bar b^\top$ and $C = \bar c\bar c^\top$, it follows from the above commutators that 
\begin{equation}\label{eq:eigenvectorproblem}
\left\{
\begin{array}{l}
 KMK \bar b = \mu_B \bar b,\\
\Sigma N \Sigma \bar c  = \mu_C \bar c,
\end{array}
\right. 
\end{equation}
for $\mu_B, \mu_C\in \R$, i.e., $\bar b$ and $\bar c$ are eigenvectors of $KMK$ and $\Sigma N \Sigma$, respectively. Note that  optimal actuator and sensor vectors  $b$ and $c$ can be obtained as $b = \sqrt{\epsilon}\, \bar b$  and $c = \sqrt{\delta}\, \bar c$.   
The above equation serves as the starting point of our analysis.  

\subsection{On the case where $\epsilon = \delta = 0$}  The eigenvector problem posed in~\eqref{eq:eigenvectorproblem} is in general hard to solve. The difficulty lies in the fact that $K$, $\Sigma$, $M$, and $N$ are nonlinear in~$B$ and $C$ (and hence $\bar b$ and $\bar c$). Yet, such nonlinearity vanishes if $\epsilon = \delta = 0$; indeed, we have in this case the following sets of equations:
\begin{equation*}
\left\{
\begin{array}{l}
 A^{\top} K + K A + I = 0,   \\
A \Sigma + \Sigma A^{\top} + I = 0,
\end{array}
\right.  
\end{equation*}
and 
\begin{equation}\label{eq:defMandN}
\left\{
\begin{array}{l}
A  M + M A^{\top} +  \Sigma C \Sigma = 0,  \\
A^\top N + N A  + K B K = 0.  
\end{array}
\right. 
\end{equation} 
Since $A$ is symmetric, $K$ and $\Sigma$ satisfy the same equation and can be solved explicitly as
$$
K = \Sigma = - \frac{1}{2} A^{-1}
$$   
Now, let $A = \Theta \Lambda \Theta^\top$, with $\Theta$ an orthogonal matrix and $\Lambda =\diag (\lambda_1, \cdots, \lambda_n)$ a diagonal matrix. For convenience, we define vectors $\beta$ and $\gamma$ as follows: 
\begin{equation}\label{eq:defbetagamma}
\beta:= \Theta^\top K \bar b = \Lambda^{-1} \Theta^\top \bar b, \quad  \gamma := \Theta^\top \Sigma\bar c = \Lambda^{-1} \Theta^\top \bar c.  
\end{equation}
The normalization condition for $\beta$ and $\gamma$ is such that 
\begin{equation}\label{eq:normalization}
\beta^\top \Lambda^2 \beta = \gamma^\top \Lambda^2\gamma  = 1.
\end{equation}
We then solve $M$ and $N$, using the newly defined variables $\beta$ and $\gamma$, as follows: 
$$
\left\{
\begin{array}{l}
M = \frac{1}{4}  \Theta  \diag(\gamma)\Psi \diag(\gamma)  \Theta^\top \vspace{3pt} \\
N = \frac{1}{4} \Theta \diag(\beta)\Psi \diag(\beta) \Theta^\top, 
\end{array}
\right.
$$
where $\diag(\beta)$ and $\diag(\gamma)$ are diagonal matrices with $\beta$ and $\gamma$ on their diagonals, and $\Psi$ is a positive-definite Cauchy matrix~\cite{schechter1959inversion} (note that the $\lambda_i$'s are negative) given by
$$
\Psi := \left [-\frac{1}{\lambda_i + \lambda_j} \right ]_{ij}. 
$$
So, with the above closed expressions, we can re-write~\eqref{eq:eigenvectorproblem} as follows: 
\begin{equation}\label{eq:betagammaequations}
\left\{
\begin{array}{l}
 \diag(\gamma) \Psi \diag(\gamma) \beta = \mu_\beta \Lambda^2 \beta \\
  \diag(\beta) \Psi \diag(\beta) \gamma = \mu_\gamma \Lambda^2 \gamma, 
\end{array}
\right. 
\end{equation}
with $\mu_\beta = 4 \mu_B$ and $\mu_\gamma = 4 \mu_C$. 

\begin{remark} {\em We note here that a similar set of equations has been investigated in~\cite{chen2017optimalact}. Yet, the results there cannot be simply applied here. Specifically, the problem  addressed in~\cite{chen2017optimalact} is the following: Consider a deterministic linear control system $\dot x = -\Lambda x + bu$ with $\Lambda$ the diagonal matrix defined above (thus, $-\Lambda$ is positive definite).  The minimal energy consumption for driving the system from an initial condition $x\in S^{n-1}$ in the unit sphere to the origin is given by $\eta(b,x) = x^\top\diag(b)^{-1}\Psi^{-1}\diag(b)^{-1} x$. We posed and solved in~\cite{chen2017optimalact} the minimax problem $\min_{b\in S^{n-1}}\max_{x\in S^{n-1}}\eta(b,x)$. 
In particular, a necessary and sufficient condition for a pair $(b,x)$ to be a minimax solution is such that  
\begin{equation}\label{eq:bax}
\left\{
\begin{array}{l}
 \diag(x) \Psi \diag(x) b = \mu b, \\
  \diag(b) \Psi \diag(b) x = \mu x, 
\end{array}
\right. 
\end{equation}
where $b$ and $x$ do {\em not} have any single entry, and moreover, $\mu$ is the {\em smallest} eigenvalue of $\diag(b)\Psi\diag(b)$.  We note that although the two sets of equations~\eqref{eq:betagammaequations} and~\eqref{eq:bax} are similar, the problems are different; indeed, as we will see that the optimal solution $(\beta,\gamma)$ is such that $\beta$ and $\gamma$ can have  {\em only one} nonzero entry.        
}
\end{remark}

\subsection{The set of equilibria.}
We now characterize solutions to~\eqref{eq:betagammaequations}, which one-to-one correspond to the points in the set of equilibria $E_{(0,0)}$ via~\eqref{eq:defbetagamma}.  
We will see soon that $E_{(0,0)}$ can be divided into two subsets---one subset can be realized as the zeros of certain algebraic equations, and hence is an algebraic variety. Moreover, the points in this subset are global maxima of a potential function (over $X^2$) whose gradient flow is given by~\eqref{model0}, and hence are unstable. The other set is  comprised only of isolated points,  and contains a unique stable equilibrium.

We introduce the following notation: For  vectors $\beta$ and $\gamma$, we define $\cal{I}_\beta$ and $\cal{I}_\gamma$, as subsets of $\{1,\ldots, n\}$, to be the collections of  indices of nonzero entries of $\beta$ and $\gamma$, respectively.   

\begin{proposition}\label{lbg1}
Suppose $(\beta,\gamma)$ is a solution to \eqref{eq:betagammaequations}, then either $\cal{I}_{\beta} \cap \cal{I}_{\gamma} = \varnothing$ or $\cal{I}_{\beta} = \cal{I}_{\gamma}$. 
\end{proposition}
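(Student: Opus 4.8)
The plan is to reduce both matrix identities in~\eqref{eq:betagammaequations} to scalar equations indexed by $i \in \{1,\dots,n\}$ and then play the two equations against each other. Writing out the $i$-th component of each side and abbreviating $s_j := \beta_j \gamma_j$ together with $w := \Psi s$ (so that $w_i = \sum_j \Psi_{ij}\beta_j\gamma_j$), the system~\eqref{eq:betagammaequations} becomes
$$
\gamma_i w_i = \mu_\beta \lambda_i^2 \beta_i \quad\text{and}\quad \beta_i w_i = \mu_\gamma \lambda_i^2 \gamma_i, \qquad i = 1,\dots,n.
$$
First I would record that the normalization~\eqref{eq:normalization} forces $\beta \neq 0$ and $\gamma \neq 0$, so both $\mathcal{I}_\beta$ and $\mathcal{I}_\gamma$ are nonempty.

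Next I would multiply the first scalar equation by $\beta_i$ and the second by $\gamma_i$. Both products equal $s_i w_i$, so they coincide, giving $\mu_\beta \lambda_i^2 \beta_i^2 = \mu_\gamma \lambda_i^2 \gamma_i^2$ for every $i$; since each $\lambda_i < 0$ this simplifies to
$$
\mu_\beta \beta_i^2 = \mu_\gamma \gamma_i^2, \qquad i = 1,\dots,n.
$$
This single identity is the engine of the proof, and the argument then splits according to whether the multipliers vanish. I would first observe that $\mu_\beta$ and $\mu_\gamma$ vanish together: if $\mu_\beta = 0$, then $\mu_\gamma \gamma_i^2 = 0$ for all $i$, and since $\gamma \neq 0$ this forces $\mu_\gamma = 0$ (and symmetrically).

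In the case $\mu_\beta,\mu_\gamma \neq 0$, the identity $\mu_\beta \beta_i^2 = \mu_\gamma \gamma_i^2$ shows that $\beta_i \neq 0$ if and only if $\gamma_i \neq 0$ (a nonzero left-hand side cannot equal a vanishing right-hand side, and conversely), which is exactly $\mathcal{I}_\beta = \mathcal{I}_\gamma$. In the remaining case $\mu_\beta = \mu_\gamma = 0$, the scalar equations reduce to $\gamma_i w_i = \beta_i w_i = 0$, so $w_i = 0$ whenever $i \in \mathcal{I}_\beta \cap \mathcal{I}_\gamma =: \mathcal{J}$. Since $s$ is supported precisely on $\mathcal{J}$, restricting $w = \Psi s$ to the rows and columns indexed by $\mathcal{J}$ yields $\Psi_{\mathcal{J}}\, s_{\mathcal{J}} = 0$, where $\Psi_{\mathcal{J}}$ is the corresponding principal submatrix and $s_{\mathcal{J}}$ has no zero entry. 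Here I would invoke that $\Psi$ is positive definite, hence so is every principal submatrix $\Psi_{\mathcal{J}}$; thus $\Psi_{\mathcal{J}}$ is invertible and $s_{\mathcal{J}} = 0$, which contradicts $s_{\mathcal{J}}$ having all entries nonzero unless $\mathcal{J} = \varnothing$. This delivers $\mathcal{I}_\beta \cap \mathcal{I}_\gamma = \varnothing$.

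The one genuinely delicate point is the degenerate case $\mu_\beta = \mu_\gamma = 0$, and I expect this to be the main obstacle: the clean cross-multiplication identity becomes vacuous there, so it cannot by itself separate the supports. The resolution is the positive definiteness of the Cauchy matrix $\Psi$—inherited by all of its principal submatrices—which converts the vanishing of $w$ on the common support into the vanishing of $s$ there, and hence into an empty overlap. Everything else is routine bookkeeping on the index sets $\mathcal{I}_\beta$ and $\mathcal{I}_\gamma$.
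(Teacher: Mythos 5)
Your proof is correct, but it takes a genuinely different route from the paper's. The paper argues by block partition: assuming $\mathcal{I}_\beta \cap \mathcal{I}_\gamma \neq \varnothing$, it writes $\diag(\beta)\Psi\diag(\beta)$ in block form relative to the support of $\beta$, observes that a nonzero $\gamma''$ outside that support would force $\mu_\gamma = 0$ via the zero block, and then uses positive definiteness of $\diag(\beta')\Psi'\diag(\beta')$ to kill $\gamma'$ and reach a contradiction; the reverse inclusion follows by symmetry. You instead reduce everything to the scalar relations $\gamma_i w_i = \mu_\beta\lambda_i^2\beta_i$ and $\beta_i w_i = \mu_\gamma\lambda_i^2\gamma_i$, cross-multiply to get $\mu_\beta\beta_i^2 = \mu_\gamma\gamma_i^2$, and split on whether the multipliers vanish. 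This is tidy: the two alternatives in the conclusion fall out exactly as the two cases $\mu_\beta = \mu_\gamma \neq 0$ (equal supports, immediately and without any symmetry step) and $\mu_\beta = \mu_\gamma = 0$ (where positive definiteness of the principal submatrix $\Psi_{\mathcal{J}}$ forces the common support to be empty). It is worth noting that your key identity $\mu_\beta\beta_i^2 = \mu_\gamma\gamma_i^2$ is precisely the relation the paper derives later, in the proof of Proposition~\ref{prop:finitemany}, to conclude $\mu_\beta = \mu_\gamma$ and $\beta = s*\gamma$; so your argument front-loads that computation and obtains the support dichotomy as a byproduct, at the price of having to treat the degenerate case $\mu_\beta = \mu_\gamma = 0$ explicitly --- which you do correctly, and which you rightly identify as the only delicate point. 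Both proofs ultimately rest on the same fact, namely that every principal submatrix of the Cauchy matrix $\Psi$ is positive definite (guaranteed here by the distinctness of the $\lambda_i$'s).
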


\begin{proof}
First, note that if $\cal{I}_{\gamma} \cap \cal{I}_{\beta} = \varnothing$, then 
$
\diag(\beta) \gamma = \diag(\gamma) \beta  = 0
$, and hence 
\begin{equation*}
  \diag(\beta) \Psi \diag(\beta)\gamma = \diag(\gamma) \Psi \diag(\gamma) \beta = 0,
\end{equation*}
which implies that $(\beta,\gamma)$ is a solution. 

We now assume $\cal{I}_{\gamma} \cap \cal{I}_{\beta} \neq \varnothing$ and prove $\cal{I}_{\beta} = \cal{I}_{\gamma}$. 
We first show that $\cal{I}_{\gamma} \subseteq \cal{I}_{\beta}$. The proof will be carried out by contradiction. Without loss of generality, we assume that the first $k$ entries of $\beta$ are nonzero. Then 
\begin{equation}
\diag(\beta) \Psi \diag(\beta) = 
\begin{bmatrix}
\diag(\beta') \Psi'\diag(\beta') & 0 \\
0 & 0
\end{bmatrix}
\end{equation}
where $\beta'$ is comprised of the first $k$ entries of $\beta$ and $\Psi'$ is the associated leading principal sub-matrix of $\Psi$. Partition $\Lambda = \diag(\Lambda',\Lambda'')$ and $\gamma$ correspondingly, we then have
\begin{equation}\label{teq1111}
\begin{bmatrix}
\diag(\beta') \Psi'\diag(\beta')  \gamma' \\
0
\end{bmatrix} 
=  \mu_\gamma
\begin{bmatrix}
\Lambda'^2 \gamma' \\
\Lambda''^2 \gamma''
\end{bmatrix} 
\end{equation}
Now, suppose that $\cal{I}_{\gamma} \subsetneq \cal{I}_{\beta}$; then,  $\gamma''\neq 0$, and hence $\Lambda''^2\gamma'' \neq 0$.  But then, from~\eqref{teq1111}, we have $\mu_\gamma = 0$, and hence $\diag(\beta') \Psi'\diag(\beta')  \gamma' = 0$. On the other hand, the matrix $\diag(\beta') \Psi'\diag(\beta') $ is positive definite since $\Psi'$ is (see~\cite{chen2017optimalact}). So, we must have $\gamma' = 0$, and hence $\cal{I}_{\gamma} \cap\cal{I}_{\beta} = \varnothing $ which is a contradiction. It thus follows that $\cal{I}_{\gamma} \subseteq \cal{I}_{\beta}$. Now, we apply the same arguments but exchange the roles of  $\beta$ and $\gamma$, and obtain that $\cal{I}_{\beta} \subseteq \cal{I}_{\gamma}$. We thus conclude that $\cal{I}_{\gamma} =\cal{I}_{\beta} $.
\end{proof}

The subset of pairs $(\beta,\gamma)$ with $\cal{I}_\beta \cap \cal{I}_\gamma = \varnothing$ can be characterized by the following algebraic equations: 
\begin{equation}\label{eq:algebraicset}
\sum^n_{i=1}\lambda^2_i\beta^2_i = \sum^n_{i=1}\lambda^2_i\gamma^2_i =1 \mbox{ and } \sum^n_{i=1}(\beta_i \gamma_i)^2 = 0,\\
\end{equation} 
where the first equation comes from~\eqref{eq:normalization} and second comes from $\cal{I}_\beta \cap \cal{I}_\gamma = \varnothing$. 
We now show that any equilibrium $(B,C)$ corresponding a point in such subset is unstable under the dynamics~\eqref{model0}. We have the following fact:

\begin{proposition}\label{lem:newpotentialfunction}
Consider the following potential function: 
$$\overline{\Phi}(B,C) := \tr(A^\top M N + NM A),$$
with $M$ and $N$ defined in~\eqref{eq:defMandN} (the dependence of $B$ and $C$ is via $N$ and $M$, respectively). Then,~\eqref{model0} is the associated gradient flow with respect to the normal metric. Moreover, $\overline \Phi(B,C) \le 0$ for all $(B,C)\in X^2$, and $\overline\Phi(B,C) = 0$ if $\cal{I}_\beta \cap \cal{I}_\gamma = \varnothing$.  
\end{proposition}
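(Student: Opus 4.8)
The plan is to prove the three assertions in turn, after first simplifying $\overline\Phi$. Since $A=A^\top$ and since $M,N$ are the (symmetric) solutions of the Lyapunov equations~\eqref{eq:defMandN}, a cyclic-trace and transpose computation collapses the potential to $\overline\Phi(B,C)=2\tr(AMN)$. This single-trace form is what makes both the gradient computation and the sign analysis tractable, so I would establish it first.

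For the gradient claim I would mirror the proof of Theorem~\ref{thm:mainresult}. Fix $(B,C)$ and a tangent vector $v=(\ad_B\Omega_B,\ad_C\Omega_C)$ with $\Omega_B\in\ker_B^\perp$ and $\Omega_C\in\ker_C^\perp$. In the regime $\epsilon=\delta=0$ the matrices $K=\Sigma=-\tfrac12 A^{-1}$ are constant, so only $M$ (through $C$) and $N$ (through $B$) vary, giving $v\cdot\overline\Phi=2\tr(AM'N)+2\tr(AMN')$, where $M'=v_C\cdot M$ and $N'=v_B\cdot N$ solve $AM'+M'A+\Sigma[C,\Omega_C]\Sigma=0$ and $AN'+N'A+K[B,\Omega_B]K=0$. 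The key manoeuvre is to bypass the integral formulas entirely: by symmetry of $N,M'$ one has $2\tr(AM'N)=\tr(N(AM'+M'A))$, and substituting the Lyapunov right-hand side turns this into $-\tr(\Sigma N\Sigma\,[C,\Omega_C])=\tr([C,\Sigma N\Sigma]\Omega_C)$; the $N'$ term becomes $\tr([B,KMK]\Omega_B)$ in the same way. Matching $v\cdot\overline\Phi=\tr([B,KMK]\Omega_B)+\tr([C,\Sigma N\Sigma]\Omega_C)$ against the normal metric~\eqref{eq:normalmetric2} yields $\Omega_B^*=-[B,KMK]$ and $\Omega_C^*=-[C,\Sigma N\Sigma]$, where I reuse verbatim the argument from Theorem~\ref{thm:mainresult} that $[B,KMK]\in\ker_B^\perp$ and $[C,\Sigma N\Sigma]\in\ker_C^\perp$. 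Hence $-\nabla\overline\Phi$ is exactly~\eqref{model0}.

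For the sign and zero-set claims I would pass to the eigenbasis $A=\Theta\Lambda\Theta^\top$ and substitute the closed forms $M=\tfrac14\Theta\diag(\gamma)\Psi\diag(\gamma)\Theta^\top$ and $N=\tfrac14\Theta\diag(\beta)\Psi\diag(\beta)\Theta^\top$ into $2\tr(AMN)$. Writing $z_i:=\beta_i\gamma_i$ and $\psi_{ij}=-1/(\lambda_i+\lambda_j)$, the trace reduces to a constant multiple of $\sum_{i,j}\lambda_i z_i z_j\psi_{ij}^2$. The decisive step is to symmetrise in $i,j$ and invoke the identity $(\lambda_i+\lambda_j)\psi_{ij}^2=-\psi_{ij}$, which collapses the double sum to $-\tfrac12\,z^\top\Psi z$; thus $\overline\Phi(B,C)$ is a negative multiple of $z^\top\Psi z$. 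Since $\Psi$ is positive definite (the Cauchy-matrix fact already cited from~\cite{chen2017optimalact}), this gives $\overline\Phi\le 0$, with equality iff $z=0$, i.e.\ iff $\beta_i\gamma_i=0$ for every $i$, which is precisely $\cal{I}_\beta\cap\cal{I}_\gamma=\varnothing$.

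I expect the only genuinely delicate point to be the gradient identification: one must confirm that in the $\epsilon=\delta=0$ limit $K,\Sigma$ are truly constant so that no $K',\Sigma'$ contributions arise, and that the symmetry-based rewriting of the Lyapunov-derivative traces is legitimate (it relies on $M,N,M',N'$ all being symmetric, which in turn uses that $[B,\Omega_B]$ and $[C,\Omega_C]$ are themselves symmetric). The sign analysis, by contrast, is essentially the single Cauchy-matrix identity above and should be routine once $\overline\Phi=2\tr(AMN)$ is in hand.
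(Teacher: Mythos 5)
Your proof is correct, and it in fact supplies more detail than the paper does. A comparison: for the gradient identification the paper simply says the computation mirrors Theorem~\ref{thm:mainresult} and omits it; you carry it out, and your replacement of the integral formulas by the pairing $2\tr(AM'N)=\tr\bigl(N(AM'+M'A)\bigr)=-\tr(\Sigma N\Sigma[C,\Omega_C])$ is a clean shortcut that is legitimate precisely for the reasons you flag ($[C,\Omega_C]$ is symmetric, $A$ is symmetric Hurwitz, so $M',N'$ are the unique symmetric solutions, and $K=\Sigma=-\tfrac12A^{-1}$ are indeed constant at $\epsilon=\delta=0$, so no $K',\Sigma'$ terms appear). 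For the sign and zero-set claims the two arguments genuinely diverge: the paper stays coordinate-free, reading $M\ge 0$, $N\ge 0$ off the Lyapunov equations and writing $\overline\Phi=-\tr(KBKM)=-\tr(\Sigma C\Sigma N)\le 0$, then getting $\overline\Phi=0$ from the complementary block supports of $M$ and $N$ (so $MN=0$); you instead diagonalize and collapse everything to $\overline\Phi=-\tfrac{1}{16}\,z^\top\Psi z$ with $z=\beta*\gamma$ via the identity $(\lambda_i+\lambda_j)\psi_{ij}^2=-\psi_{ij}$. Your route buys the converse for free --- $\overline\Phi=0$ if and \emph{only} if $\mathcal{I}_\beta\cap\mathcal{I}_\gamma=\varnothing$, using positive definiteness of the Cauchy matrix $\Psi$ (which needs the standing assumption that the $\lambda_i$ are distinct) --- whereas the paper's PSD argument only gives the stated implication but is more robust, extending verbatim to the perturbed potentials with $\epsilon,\delta>0$ where no closed form for $M,N$ is available. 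Both arguments are sound for the proposition as stated.
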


\begin{proof}
We omit the proof that~\eqref{model0} is the gradient flow of $\overline\Phi$. It directly follows from computation, and the derivation is similar to the proof of Theorem~\ref{thm:mainresult}. 

We show that $\overline\Phi \le 0$. Since $KBK \ge 0$, $\Sigma C \Sigma \ge 0$, and $A$ is stable,  from the Lyapunov equation~\eqref{eq:defMandN}, we have $M\ge 0$ and $N \ge 0$. So,
$$
\overline\Phi(B,C) = -\tr(KBK M) = -\tr(\Sigma C \Sigma N) \le 0.
$$ 
Now, let $(B,C)$ be such that $\cal{I}_\beta \cap \cal{I}_\gamma = 0$. Without loss of generality, we assume that the first $k$ entries of $\beta$ are nonzero. Then, the zero patterns of $M$ and $N$ are given by:
$$
M = 
\begin{bmatrix}
0 & 0 \\
0 & *
\end{bmatrix} \mbox{ and } 
N = 
\begin{bmatrix}
* & 0 \\
0 & 0
\end{bmatrix},
$$
and hence $MN = 0$, which implies that $\overline \Phi = 0$. 
\end{proof}

For the remainder of the subsection, we focus on the case where $\cal{I}_\beta = \cal{I}_\gamma$. We fix a {\em nonempty} subset $\cal{I}'$ of $\{1,\ldots, n\}$, and assume that $\cal{I}_\beta = \cal{I}_\gamma = \cal{I}'$. Without loss of generality, we assume that $\cal{I}' = \{1,\ldots, k\}$, for $k\le n$.

Further, we denote by $S_{\cal{I}'}$ a finite abelian group defined as follows:
$$
S_{\cal{I}'} := \{ (d_1, \ldots, d_k, {\bf 0}_{n - k})\in \R^n \mid d^2_i = 1 \}.
$$  
Let ``$*$'' be the Hadamard product (i.e., entry-wise multiplication). Then, it should be clear that if $s, s'\in S_{\cal{I}'}$, then $s*s' = s'*s$, with $$s_{\rm id}:=({\bf 1}_k,{\bf 0}_{n - k})$$ the identity element.  In particular, $s*s = s_{\rm id}$ for all $s\in S_{\cal I'}$

The group $S_{\cal{I}'}$ acts on the pair of vectors $(\beta,\gamma)$ by $$s\cdot (\beta, \gamma) := (s*\beta, s*\gamma)$$ for any $s\in S_{\cal{I}'}$. One of the main purposes of introducing the abelian group $S_{\cal{I}'}$ is the following:

\begin{lemma} If $(\beta, \gamma)$, with $\cal{I}_\beta = \cal{I}_\gamma = \cal{I}'$, is a solution to~\eqref{eq:betagammaequations}, then $s\cdot (\beta, \gamma)$ is a solution to~\eqref{eq:betagammaequations} as well. 
\end{lemma}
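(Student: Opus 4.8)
The plan is to realize the $S_{\cal{I}'}$-action as left-multiplication by a diagonal matrix and then exploit that diagonal matrices commute, together with the support hypothesis $\cal{I}_\beta = \cal{I}_\gamma = \cal{I}'$. First I would fix $s\in S_{\cal{I}'}$ and set $D := \diag(s)$, so that the Hadamard product by $s$ is exactly left-multiplication by $D$; thus $s\cdot(\beta,\gamma) = (D\beta, D\gamma)$, and I write $\tilde\beta := D\beta$ and $\tilde\gamma := D\gamma$. Two elementary properties of $D$ drive the argument: (i) $D$ is diagonal, hence commutes with $\Lambda^2$ and with $\diag(\gamma)$; and (ii) $D^2 = \diag(s*s) = \diag(s_{\rm id})$ is the orthogonal projection onto the coordinates in $\cal{I}'$, so that $D^2\beta = \beta$ and $D^2\gamma = \gamma$, precisely because $\cal{I}_\beta = \cal{I}_\gamma = \cal{I}'$.

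Next I would substitute into the first equation of~\eqref{eq:betagammaequations}. Using $\diag(\tilde\gamma) = D\,\diag(\gamma)$ and property (i) to gather the two stray factors of $D$, I would obtain
\begin{equation*}
\diag(\tilde\gamma)\,\Psi\,\diag(\tilde\gamma)\,\tilde\beta = D\,\diag(\gamma)\,\Psi\,\diag(\gamma)\,D^2\beta = D\,\diag(\gamma)\,\Psi\,\diag(\gamma)\,\beta,
\end{equation*}
where the last equality uses $D^2\beta = \beta$ from property (ii). Applying the original identity $\diag(\gamma)\Psi\diag(\gamma)\beta = \mu_\beta\Lambda^2\beta$ and then commuting $D$ past $\Lambda^2$ gives $\mu_\beta\Lambda^2 D\beta = \mu_\beta\Lambda^2\tilde\beta$, which is exactly the first equation of~\eqref{eq:betagammaequations} for the pair $(\tilde\beta,\tilde\gamma)$, with the eigenvalue $\mu_\beta$ unchanged. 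The identical computation with the roles of $\beta$ and $\gamma$ exchanged, now invoking $D^2\gamma = \gamma$, transfers the second equation as well. I would also remark that the normalization~\eqref{eq:normalization} is preserved, since $\tilde\beta^\top\Lambda^2\tilde\beta = \beta^\top\Lambda^2 D^2\beta = \beta^\top\Lambda^2\beta = 1$, and likewise for $\gamma$.

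The computation is routine; the only point that requires care, and the sole place where the hypothesis is used, is that $D^2$ is not the identity matrix but the projection $\diag(s_{\rm id})$. Consequently the two surplus factors of $D$ collapse correctly only because $\beta$ and $\gamma$ are supported on $\cal{I}'$. I would therefore flag this step explicitly so the reader sees exactly where $\cal{I}_\beta = \cal{I}_\gamma = \cal{I}'$ enters; everything else is bookkeeping with commuting diagonal matrices.
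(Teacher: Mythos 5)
Your proof is correct and follows the same route the paper intends (the paper omits the proof, noting only that it ``follows from computation''): a direct substitution of $(s*\beta,s*\gamma)$ into~\eqref{eq:betagammaequations}, using that $\diag(s)$ commutes with the other diagonal matrices. Your explicit observation that $\diag(s)^2=\diag(s_{\rm id})$ is a projection rather than the identity, so that the collapse $\diag(s)^2\beta=\beta$ is exactly where the hypothesis $\cal{I}_\beta=\cal{I}_\gamma=\cal{I}'$ enters, is a worthwhile clarification but not a different method.
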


We omit the proof as it follows from computation.  For a pair $(\beta, \gamma)$,  we denote by $O_{(\beta, \gamma)}$ the orbit under the group action, i.e., 
$$
O_{(\beta, \gamma)}:= \{s\cdot (\beta, \gamma) \mid s\in S_{\cal{I}'} \}
$$
We further note that the potential function $\overline{\Phi}$ defined in Lemma~\ref{lem:newpotentialfunction} is invariant under the group action. Specifically, let $(\bar b\bar b^\top ,\bar c\bar c^\top)$ and $(\bar b'\bar b'^\top,\bar c'\bar c'^\top)$ be two pairs in $X^2$, and let $(\beta, \gamma)$ and $(\beta',\gamma')$ be defined by~\eqref{eq:defbetagamma}. If $(\beta',\gamma') = s\cdot (\beta, \gamma)$, then $\overline \Phi(\bar b\bar b^\top ,\bar c\bar c^\top) = \overline \Phi(\bar b'\bar b'^\top,\bar c'\bar c'^\top)$. We omit the computational details. It follows that if an equilibrium $(\beta, \gamma)$ is stable/unstable under~\eqref{model0}, then so is any pair in its orbit.

We will now state facts about solutions $(\beta,\gamma)$ to~\eqref{eq:betagammaequations} with $\cal{\beta} = \cal{\gamma} = \cal{I}'$. Let $\lambda:= (\lambda_1,\ldots, \lambda_n)\in \R^n$.  For a given $s\in S_{\cal{I}'}$, we define a vector $\xi_{s}$ as follows: 
$$
\xi_s:= (\diag(s) \Pi \diag(s))^+ (s_{\rm id}*\lambda  * \lambda)
$$
where $M^+$ is the MooreÐPenrose inverse of~$M$.  
In the case here,  we have that  $M = [M', 0; 0, 0]$ is a symmetric matrix with $M'$ nonsingular. Then, $M^+ = [M'^{-1}, 0; 0, 0]$.

We introduce a few notations here. For a vector $v = (v_1,\ldots, v_n)\in \R^n$, we let $$\sgn(v):= (\sgn(v_1),\ldots, \sgn(v_n))$$ where $\sgn(\cdot)$ is the sign function. 
we write $v > 0$  (resp. $v\ge 0$) if each entry of $v$ is nonnegative (resp. positive). Furthermore, for any vector $v \ge 0$,  we let $$\sqrt{v} := (\sqrt{v_1},\ldots, \sqrt{v_n}).$$ 
With the notations above, we state the following fact: 

\begin{proposition}\label{prop:finitemany}
Let $\cal{I}'$ be a nonempty subset of $\{1,\ldots, n\}$. 
Then, for each $s\in S_{\cal{I}'}$, there exists at most one orbit $O_{(\beta,\gamma)}$, with $$\cal{I}_\beta = \cal{I}_\gamma = \cal{I}' \quad \mbox{ and }\quad \sgn(\beta*\gamma) = s,$$ such that $(\beta, \gamma)$ is a solution to~\eqref{eq:betagammaequations}. Moreover, such an orbit exists if and only if $\xi_s > 0$ and $(\beta, \gamma)$ can be chosen such that 
\begin{equation}\label{eq:solutionsss}
(\beta, \gamma) = \frac{1}{|\Lambda \sqrt{\xi_s}|}\left (\sqrt{\xi_s}, s * \sqrt{\xi_s} \right ).
\end{equation}\,
\end{proposition}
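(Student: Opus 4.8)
The plan is to exploit the coupled structure of~\eqref{eq:betagammaequations} to show that, on the common support $\cal{I}'$, the vectors $\beta$ and $\gamma$ must have entrywise equal magnitudes and share a common multiplier; this collapses the nonlinear system into a single \emph{linear} equation whose solution is exactly $\xi_s$, after which the existence criterion and the explicit formula~\eqref{eq:solutionsss} fall out by normalization.

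First I would set $w := \beta * \gamma$ and rewrite~\eqref{eq:betagammaequations} as $\gamma*(\Psi w) = \mu_\beta\Lambda^2\beta$ and $\beta*(\Psi w) = \mu_\gamma\Lambda^2\gamma$. Pairing the first with $\beta$ and the second with $\gamma$ in the Euclidean inner product and invoking the normalization~\eqref{eq:normalization}, both right-hand sides collapse to $w^\top\Psi w$, so $\mu_\beta = \mu_\gamma =: \mu = w^\top\Psi w$. Since $w$ is supported exactly on $\cal{I}'$ and the principal submatrix $\Psi'$ of $\Psi$ indexed by $\cal{I}'$ is positive definite, this already forces $\mu > 0$. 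Next I would take Hadamard products of the two equations with $\beta$ and $\gamma$ respectively; these give $w*(\Psi w) = \mu\Lambda^2\beta^{*2}$ and $w*(\Psi w)=\mu\Lambda^2\gamma^{*2}$, and since $\mu\ne 0$ and $\Lambda$ is invertible we conclude $\beta_i^2 = \gamma_i^2$ for all $i$. On $\cal{I}'$ this is precisely $\beta = s*\gamma$ with $s = \sgn(\beta*\gamma)$.

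With $\beta = s*\gamma$ in hand, substituting into the first equation of~\eqref{eq:betagammaequations} and cancelling the nonzero factors $\gamma_i$ on $\cal{I}'$ produces the linear relation $\Psi' w' = \mu\Lambda'^2 s'$, where primes denote restriction to $\cal{I}'$. Using $w' = s'*\gamma'^{*2}$ and $\diag(s')^{-1} = \diag(s')$, I would solve this for $\gamma'^{*2} = \mu(\diag(s')\Psi'\diag(s'))^{-1}\lambda'^{*2}$, which is exactly $\mu\,\xi_s$ restricted to $\cal{I}'$. The normalization $\gamma^\top\Lambda^2\gamma=1$ then fixes $\mu = 1/|\Lambda\sqrt{\xi_s}|^2$. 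Since every entry of $\gamma$ on $\cal{I}'$ is nonzero, I need $\gamma'^{*2} = \mu\xi_s' > 0$ entrywise; because $\mu>0$ this holds \emph{iff} the on-support part $\xi_s'$ is positive, which is the stated condition $\xi_s>0$. Taking positive square roots yields the representative~\eqref{eq:solutionsss}, and the remaining freedom of signs in $\gamma$ on $\cal{I}'$ is exactly the $S_{\cal{I}'}$-action, so all solutions sharing the data $(\cal{I}', s)$ lie in a single orbit, giving ``at most one orbit.'' For the converse I would substitute~\eqref{eq:solutionsss} back into~\eqref{eq:betagammaequations} and verify directly that it is a solution.

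I expect the main obstacle to be the structural step in the second paragraph: finding the right pairings---inner product versus Hadamard product, against $\beta$ and against $\gamma$---that simultaneously deliver $\mu_\beta=\mu_\gamma$ and $|\beta_i|=|\gamma_i|$. Once these two identities hold the system decouples into one linear equation and the remainder is bookkeeping with the support restriction and the Moore--Penrose inverse. A secondary point requiring care is the sign argument: it is the positive-definiteness of $\Psi'$, equivalently $\mu = w^\top\Psi w>0$, that eliminates the spurious branch $\xi_s'<0$ and makes the existence condition exactly $\xi_s>0$ rather than merely ``$\xi_s$ of one sign''.
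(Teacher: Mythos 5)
Your proof is correct and follows essentially the same route as the paper's: establish $\mu_\beta=\mu_\gamma$ and $\beta=s*\gamma$ from the componentwise/Hadamard structure of~\eqref{eq:betagammaequations}, reduce to the linear equation $\diag(s)\Psi\diag(s)(\gamma*\gamma)=\mu(s_{\rm id}*\lambda*\lambda)$ on the support, invert, and normalize. Your quadratic-form identity $\mu_\beta=\mu_\gamma=w^\top\Psi w>0$ is a slightly tidier way of obtaining both the equality and the positivity of the multiplier (which the paper's condensed proof asserts rather than derives, and which is needed to conclude that the existence condition is exactly $\xi_s>0$), but the argument is otherwise the same.
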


\begin{remark}
Prop.~\ref{prop:finitemany} implies that there are finitely many solutions $(\beta, \gamma)$ to~\eqref{eq:betagammaequations} under the condition that $\cal{I}_\beta = \cal{I}_\gamma$. Indeed, there are only finite many subsets $\cal{I}'$ of $\{1,\ldots, n\}$ and finitely many elements  $s$ in $S_{\cal{I}'}$. Then, for fixed $\cal{I}'$ and $s\in S_{\cal{I}'}$, Prop.~\ref{prop:finitemany} says that there is at most one orbit $O_{(\beta, \gamma)}$, with $\cal{I}_\beta = \cal{I}_\gamma = \cal{I}'$ and $\sgn(\beta*\gamma) = s$, which is comprised of as many as $2^{|\cal{I}'|}$ elements. 
\end{remark}

\begin{proof}[Proof of Prop.~\ref{prop:finitemany}]
We first show that $\beta = s* \gamma$. Let 
\begin{equation*}
w := \Psi \diag(\gamma)\beta = \Psi \diag(\beta )\gamma  
\end{equation*}
Then, we can re-write~\eqref{eq:betagammaequations} as follows:
\begin{equation}\label{teq2}
\left\{
\begin{array}{l}
\diag(w) \gamma = \mu_\beta \Lambda^2 \beta, \\ 
\diag(w) \beta = \mu_\gamma \Lambda^2 \gamma.\\
\end{array}
\right.
\end{equation}
Since $\mu_\beta$ and $\mu_\gamma$ are nonzero, we have $\cal{I}_{\beta} = \cal{I}_\gamma \subseteq \cal{I}_w$. It thus follows that
\begin{equation*}
\mu_\gamma \gamma_i^2 = \mu_\beta \beta^2_i, \quad \forall i = 1,\ldots, n.
\end{equation*}
From the normalization condition~\eqref{eq:normalization},  we obtain $\mu_\beta = \mu_\gamma$,   and hence $\beta = s* \gamma$. 

We now show that any such pair $(\beta, \gamma)$ satisfies the condition that $\beta*\beta$ (or $\gamma*\gamma$) is linearly proportional to the vector $\xi_s$.
From~\eqref{eq:betagammaequations}, we have
$$
\left\{
\begin{array}{l}
\diag(\beta)^+\diag(\gamma)\Psi\diag(\gamma) \beta = \mu_\beta \Lambda^2\diag(\beta)^+\beta, \\
\diag(\gamma)^+\diag(\beta)\Psi\diag(\beta) \gamma = \mu_\beta \Lambda^2\diag(\gamma)^+\gamma. 
\end{array}
\right.
$$
Using the fact that $\gamma = s*\beta$ and $s*s = s_{\rm id}$, we obtain 
$$
\left\{
\begin{array}{l}
\diag(s)\Psi\diag(s) (\beta*\beta) = \mu_\beta (s_{\rm id}*\lambda *\lambda), \\
\diag(s)\Psi\diag(s) (\gamma* \gamma) = \mu_\gamma (s_{\rm id}*\lambda *\lambda). 
\end{array}
\right.
$$
It thus follows that
$$
\beta*\beta = \gamma* \gamma \propto  \xi_s.
$$
The normalization condition~\eqref{eq:normalization} then yields~\eqref{eq:solutionsss}.
\end{proof}

\subsection{The unique stable equilibrium}
We have so far characterized the set of equilibria associated with~\eqref{model0} for the case $\epsilon = \delta = 0$. In particular, we have shown that any equilibrium $(B,C)$ with $\cal{I}_\beta \cap \cal{I}_\gamma = \varnothing$ is unstable.  
Recall that $0 > \lambda_1\ldots > \lambda_n$. We denote by $v_i$, with $|v_i| = 1$, an eigenvector of $A$ corresponding to $\lambda_i$.  We now state the following fact about the stable equilibrium of~\eqref{model0}:

\begin{theorem}\label{thm:iavsspeciallastgameforia??}
There is a unique stable equilibrium associated with~\eqref{model0} for the case $\epsilon = \delta = 0$, and is given by $$(B,C)= (v_1v^\top_1, v_1v_1^\top).$$\, 
\end{theorem}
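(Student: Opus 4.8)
The plan is to recast the statement as a question about the local minima of the auxiliary potential $\overline{\Phi}$ from Proposition~\ref{lem:newpotentialfunction}. Since \eqref{model0} is the gradient descent flow of $\overline{\Phi}$ on the compact manifold $X^2$, its asymptotically stable equilibria are exactly the isolated local minima of $\overline{\Phi}$; because the equal-support equilibria are finite in number (disjoint-support families aside; see the remark following Proposition~\ref{prop:finitemany}), ``stable'' and ``strict local minimum'' coincide. Thus it suffices to show that $\overline{\Phi}$ attains a strict local minimum at $(v_1v_1^\top,v_1v_1^\top)$ and nowhere else.

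First I would put $\overline{\Phi}$ in coordinates. Using $K=\Sigma=-\tfrac12 A^{-1}$, the identity $\overline{\Phi}=-\tr(KBKM)$, and the closed forms for $M,N$, a direct computation gives $\overline{\Phi}=-c\,p^\top\Psi p$ for a positive constant $c$, where $p:=\beta*\gamma$ is the Hadamard product of the vectors in~\eqref{eq:defbetagamma}. Since $\Psi$ is positive definite, $\overline{\Phi}\le 0$ (recovering Proposition~\ref{lem:newpotentialfunction}), with equality exactly on the disjoint-support set. The normalization~\eqref{eq:normalization} together with Cauchy--Schwarz yields $\sum_i\lambda_i^2|p_i|\le 1$, so the feasible $p$ lie in the cross-polytope $P:=\{p:\sum_i\lambda_i^2|p_i|\le 1\}$. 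Because $f:=p^\top\Psi p$ is convex, its maximum over $P$ is attained at an extreme point $\pm e_i/\lambda_i^2$, where $f=1/(2|\lambda_i|^5)$; as $|\lambda_1|<\cdots<|\lambda_n|$, this is maximized uniquely at $i=1$. The vertex $e_1/\lambda_1^2$ is realized by $\beta=\gamma=e_1/\lambda_1$, i.e. exactly by $(v_1v_1^\top,v_1v_1^\top)$, and the equality case of Cauchy--Schwarz forces this to be the only preimage in $X^2$. Hence $\overline{\Phi}$ attains its global (and thus a strict local) minimum there and only there, giving both existence and stability of the claimed equilibrium.

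It remains to exhibit a descent direction at every other equilibrium. By Proposition~\ref{lbg1} each equilibrium has $\mathcal{I}_\beta\cap\mathcal{I}_\gamma=\varnothing$ or $\mathcal{I}_\beta=\mathcal{I}_\gamma$. The disjoint-support equilibria give $\overline{\Phi}=0$, the global maximum, hence are unstable. For an equal-support equilibrium with $\mathcal{I}':=\mathcal{I}_\beta=\mathcal{I}_\gamma$ I pass to the coordinates $q_i:=\gamma_i^2=|p_i|$ for $i\in\mathcal{I}'$: by Proposition~\ref{prop:finitemany} one has $\beta=s*\gamma$, so $p=s*q$ and $\overline{\Phi}=-c\,q^\top(\diag(s)\Psi\diag(s))q$ on the simplex $\{\sum_i\lambda_i^2 q_i=1,\ q_i\ge 0\}$; this restriction is strictly concave since $\diag(s)\Psi\diag(s)$ is positive definite. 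If $|\mathcal{I}'|\ge 2$ the equilibrium sits in the relative interior of the simplex, so it is the unique (hence strict local) maximum of this smooth restriction of $\overline{\Phi}$, and is therefore unstable. The singleton case $\mathcal{I}'=\{i\}$ with $i\ge 2$, i.e. $(v_iv_i^\top,v_iv_i^\top)$, genuinely requires computation: here $q$ degenerates at the vertex, so I instead rotate $(\bar b,\bar c)$ toward $v_1$ via $\beta=\gamma=a e_1+b e_i$ with $\lambda_1^2 a^2+\lambda_i^2 b^2=1$ and evaluate the second variation of $\overline{\Phi}$ in $a$ at $a=0$; its sign reduces to the elementary inequality $\lambda_1^3+\lambda_1^2\lambda_i-2\lambda_i^3>0$, valid whenever $0>\lambda_1>\lambda_i$, producing a strict descent direction.

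The main obstacle is exactly this last step: certifying that no spurious local minimum survives among the equal-support equilibria. The difficulty is that $(\beta,\gamma)\mapsto\beta*\gamma$ is nonlinear, so convexity of $f$ in $p$ does not transfer verbatim to $X^2$; the substitution $q=\gamma*\gamma$ linearizes the constraint and restores strict concavity, cleanly disposing of the multidimensional supports, but it degenerates precisely at the singleton vertices, forcing the explicit second-variation sign computation there. Verifying that the descent directions I construct are genuine tangent directions of $X^2$ (they are, since $\gamma_i=\sqrt{q_i}$ is a smooth embedding at interior points, and the rotation $\beta=\gamma=ae_1+be_i$ is a smooth path) and keeping the boundary/degenerate cases straight is where the real work lies; by contrast the global-minimum argument pinning down $(v_1v_1^\top,v_1v_1^\top)$ is comparatively routine.
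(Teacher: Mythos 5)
Your argument is correct, and on the two hardest steps it takes a genuinely different route from the paper. For stability of $(v_1v_1^\top,v_1v_1^\top)$ the paper computes the Hessian of $\overline\Phi$ at each critical point and checks that it is positive definite only there; you instead exploit the closed form $\overline\Phi=-c\,(\beta*\gamma)^\top\Psi(\beta*\gamma)$, note via Cauchy--Schwarz that $p=\beta*\gamma$ is confined to the cross-polytope $\sum_i\lambda_i^2|p_i|\le 1$, and maximize the strictly convex form $p^\top\Psi p$ at a vertex, where the comparison $1/(2|\lambda_i|^5)$ singles out $i=1$. This avoids any Hessian computation at the good point, yields the stronger statement that $(v_1v_1^\top,v_1v_1^\top)$ is the \emph{unique global minimizer} of $\overline\Phi$ (which feeds directly into the Corollary), at the mild cost of certifying only asymptotic rather than exponential stability---enough for the theorem as stated. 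For the equal-support equilibria with $|\cal{I}'|\ge 2$, the paper's (omitted) argument applies Perron--Frobenius to the irreducible positive matrix $D^{-2}\diag(s*\gamma)\Psi\diag(s*\gamma)$ together with the fact that no solution has sign pattern $\pm\mathbf{1}$; your substitution $q=\gamma*\gamma$, which turns $\overline\Phi|_{\{\beta=s*\gamma\}}$ into a strictly concave quadratic on the open face of a simplex, is a clean and more elementary replacement giving $|\cal{I}'|-1$ negative Hessian directions at once. The singleton case $i\neq 1$ and the disjoint-support case coincide with the paper's treatment: your second-variation sign condition $\lambda_1^3+\lambda_1^2\lambda_i-2\lambda_i^3>0$ is exactly the numerator the paper obtains for $H_p<0$, and the ``global maximum of $\overline\Phi\le 0$, hence unstable'' step is asserted at the same level of detail as in the paper (to make it airtight one should add that arbitrarily close to such a point there are states with $\overline\Phi<0$, whose trajectories must converge to the finitely many negative-level critical points, all bounded away from the disjoint-support set).
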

  
 \noindent 
{\em Sketch of proof.} Due to space limitation, we provide below a sketch of the proof.

First note that $(v_1v^\top_1, v_1v_1^\top)$ is an equilibrium associated with~\eqref{model0}. Indeed, let $\bar b = \bar c = v_1$, then the corresponding $\beta$ and $\gamma$ are given by $\beta = \gamma = e_1 / \lambda_1$. The pair $(\beta,\gamma)$ is a solution to~\eqref{eq:betagammaequations}.  

The analysis of stability relies on computing the {\em Hessian} of the  potential function $\overline \Phi$ at a critical point $p = (B, C)\in X^2$. Roughly speaking, the Hessian at $p$, denoted by $H_{p}$, is a symmetric bilinear form on $T_p X^2$, defined as follows:
$$
H_p(v, w) := w\cdot v\cdot \Phi(B,C).
$$ 
More precisely, one needs to extend $w$ and $v$ locally to vector fields over an open neighborhood of $p$ to make the above definition works. But since $(B, C)$ is a critical point, the way how we extend $v$ and $w$ to get the local vector fields does not matter. When the state space is an Euclidean space, the Hessian can be represented as a symmetric matrix, known as the Hessian matrix $\partial^2 \Phi / \partial p^2$, and we simply have $H_p( v, w) = v^\top \partial^2 \Phi / \partial p^2 w$. 

A critical point $p$ is called {\em nondegenerate} if the bilinear form $H_p$ is nondegenerate, and is {\em exponentially stable} if it is positive definite. We evaluate the Hessian $H_p$ at any critical point. We show that $H_p$ is positive definite if and only if $p = (v_1v^\top_1, v_1v_1^\top)$, and moreover, $H_p$ is the only positive semi-definite Hessian among all the other $H_{p'}$'s for $p'$ a critical point. Thus, we conclude that $p$ is the unique (exponentially) stable equilibrium.  \hfill{\qed}

It is known that if an equilibrium is exponentially stable under a nominal dynamics, then it is robust under perturbation (of the dynamics) in the sense that there will be another exponentially stable equilibrium, close to the original one, associated with the perturbed dynamics. This, in particular, implies the following fact:

\begin{corollary}
The pair $(v_1v_1^\top, v_1v_1^\top)$ is the unique stable equilibrium associated with~\eqref{model0} for small $\epsilon$ and $\delta$. The minimum value of $\Phi(B, C)$ over $X^2$ is given by
\begin{multline*}
\min_{(B, C)\in X^2} \Phi(B,C) = \Phi(v_1v_1^\top, v_1v_1^\top) = \\
\frac{\sqrt{\lambda^2_1 + \epsilon} + \sqrt{\lambda^2_1 + \delta}}{(\sqrt{\lambda^2_1 + \epsilon} - \lambda_1)(\sqrt{\lambda^2_1 + \delta} - \lambda_1)} - \sum^n_{k = 2}\frac{1}{2\lambda_k}.
\end{multline*}\,
\end{corollary}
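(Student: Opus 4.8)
The plan is to establish the corollary in three stages. \emph{First}, I would verify by direct computation that $(v_1v_1^\top, v_1v_1^\top)$ is an \emph{exact} equilibrium of~\eqref{model0} for \emph{every} $\epsilon,\delta\ge 0$, not merely in the limit $\epsilon=\delta=0$; this is what makes the value in the corollary the value at the unperturbed point. Passing to the eigenbasis of $A$, write $A=\Theta\Lambda\Theta^\top$ and take $\bar b=\bar c=v_1$, so that in this basis $B=C=e_1e_1^\top$. Then the AREs in~\eqref{eq:algebraicriccati} have \emph{diagonal} PSD solutions $K=\diag(k_1,\dots,k_n)$ and $\Sigma=\diag(\sigma_1,\dots,\sigma_n)$: the index-$i$ equation for $i\ge 2$ reads $2\lambda_i k_i+1=0$, while the index-$1$ equation is the scalar quadratic $\epsilon k_1^2-2\lambda_1 k_1-1=0$. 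Solving gives $k_i=\sigma_i=-1/(2\lambda_i)$ for $i\ge 2$ and $k_1=1/(\sqrt{\lambda_1^2+\epsilon}-\lambda_1)$, $\sigma_1=1/(\sqrt{\lambda_1^2+\delta}-\lambda_1)$. Since $A-\epsilon BK$ and $A-\delta\Sigma C$ are then diagonal and Hurwitz and $KBK,\,\Sigma C\Sigma\propto e_1e_1^\top$, the Lyapunov equations~\eqref{eq:LyaEquation} force $M,N\propto e_1e_1^\top$; hence $KMK$ and $\Sigma N\Sigma$ are scalar multiples of $B$ and $C$, and the double brackets in~\eqref{model0} vanish. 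This shows $(v_1v_1^\top,v_1v_1^\top)$ is a fixed equilibrium across all $(\epsilon,\delta)$.

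\emph{Second}, I would upgrade Theorem~\ref{thm:iavsspeciallastgameforia??} to small $\epsilon,\delta$ and identify the global minimizer. Theorem~\ref{thm:iavsspeciallastgameforia??} and its Hessian computation give that at $\epsilon=\delta=0$ the point $(v_1v_1^\top,v_1v_1^\top)$ is the unique stable equilibrium, with positive-definite (hence nondegenerate) Hessian, all other critical points being unstable. Because~\eqref{model0} depends smoothly on $(\epsilon,\delta)$ through $K,\Sigma,M,N$, and because the point in question is a fixed equilibrium for all $(\epsilon,\delta)$ by the first stage, its Hessian $H_p$ varies continuously in $(\epsilon,\delta)$ and therefore stays positive definite for $\epsilon,\delta$ small; by structural stability of hyperbolic equilibria the remaining (nondegenerate) unstable critical points persist as unstable. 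Thus $(v_1v_1^\top,v_1v_1^\top)$ remains the unique stable equilibrium. Finally, since $X^2$ is compact and $\Phi$ is continuous, $\Phi$ attains a global minimum at some equilibrium; a global minimizer of the gradient potential is Lyapunov stable and so cannot be any of the unstable critical points, whence it must coincide with the unique stable equilibrium $(v_1v_1^\top,v_1v_1^\top)$.

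\emph{Third}, I would evaluate $\Phi$ in closed form. Using~\eqref{eq:performance} with diagonal $A=\Lambda$, $K$, $\Sigma$, one has $\Phi=\tr(A^\top K\Sigma+\Sigma KA+K+\Sigma)=\sum_{i=1}^n\bigl(2\lambda_i k_i\sigma_i+k_i+\sigma_i\bigr)$. For $i\ge 2$ each summand equals $-1/(2\lambda_i)$, producing the term $-\sum_{k=2}^n 1/(2\lambda_k)$. For $i=1$, writing $a:=\sqrt{\lambda_1^2+\epsilon}-\lambda_1$ and $d:=\sqrt{\lambda_1^2+\delta}-\lambda_1$ so that $k_1=1/a$, $\sigma_1=1/d$, the summand is $(2\lambda_1+a+d)/(ad)$, and the identity $2\lambda_1+a+d=\sqrt{\lambda_1^2+\epsilon}+\sqrt{\lambda_1^2+\delta}$ yields exactly the first fraction in the stated formula. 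The anticipated \textbf{main obstacle} is not the computation, which is routine once the diagonal structure is in hand, but the uniqueness/global-minimality step: rigorously ensuring that no stable equilibrium emerges under perturbation from the \emph{degenerate} algebraic variety of equilibria with $\cal{I}_\beta\cap\cal{I}_\gamma=\varnothing$ (where, by Proposition~\ref{lem:newpotentialfunction}, the Hessian of $\overline\Phi$ is only negative semidefinite), and cleanly transferring the ``unique stable equilibrium'' conclusion for~\eqref{model0} into a ``global minimizer of $\Phi$'' statement, given that $\Phi$ itself is constant on $X^2$ at $\epsilon=\delta=0$ and only its $(\epsilon,\delta)$-perturbation resolves the minimizer.
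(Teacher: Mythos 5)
Your proposal is correct and follows essentially the same route as the paper: diagonalize $A$, verify that $K,\Sigma,M,N$ are diagonal with exactly the entries the paper lists (so $(v_1v_1^\top,v_1v_1^\top)$ is an equilibrium for all $\epsilon,\delta$), invoke smooth dependence of the Hessian on $(\epsilon,\delta)$ for persistence of the unique stable equilibrium, and evaluate $\Phi$ termwise. You supply more detail than the paper does---in particular your closing remark about the degenerate continuum of equilibria with $\cal{I}_\beta\cap\cal{I}_\gamma=\varnothing$ correctly identifies the step the paper also leaves unaddressed.
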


\begin{proof}
We show that $(v_1^\top v_1, v_1v_1^\top)$ is an equilibrium of~\eqref{model0}, or equivalently, $v_1$ is an eigenvector of $KMK$ and of $\Sigma N\Sigma$ by~\eqref{eq:eigenvectorproblem}. Recall that $A = \Theta \Lambda \Theta^\top$, and $\Theta^\top v_1 = e_1$.  
It suffices to show that $\Theta K \Theta^\top$, $\Theta M\Theta^\top$, $\Theta \Sigma\Theta^\top$, and $\Theta N\Theta^\top$ are all diagonal matrices.  This directly follows from computation. 
Note that $0> \lambda_1 \ldots > \lambda_n$. We have
\begin{equation*}\label{eq:KMSNdiagonal}
\left\{
\begin{array}{l}
 \Theta K\Theta^\top = \diag(\nicefrac{1}{\sqrt{\lambda^2_1 + \epsilon} - \lambda_1}, -\nicefrac{1}{2\lambda_2}, \ldots, -\nicefrac{1}{2\lambda_n}), \vspace{3pt}\\
\Theta \Sigma \Theta^\top= \diag(\nicefrac{1}{\sqrt{\lambda^2_1 + \delta} - \lambda_1}, -\nicefrac{1}{2\lambda_2}, \ldots, -\nicefrac{1}{2\lambda_n}), \vspace{3pt}\\
 \Theta M \Theta^\top= \diag(\nicefrac{1}{2\sqrt{\lambda^2_1+\epsilon} \left (\sqrt{\lambda^2_1 + \delta} - \lambda_1\right )^2}, {\bf 0}_{n-1}), \vspace{3pt}\\
 \Theta N\Theta^\top = \diag(\nicefrac{1}{2\sqrt{\lambda^2_1+\delta} \left (\sqrt{\lambda^2_1 + \epsilon} - \lambda_1\right )^2}, {\bf 0}_{n-1}). 
\end{array} 
\right.
\end{equation*} 
The fact that $(v_1^\top v_1, v_1v_1^\top)$ is the unique stable equilibrium (for $\epsilon$ and $\delta$ small) follows from the fact that the Hessian $H_p$ at a critical point $p$ depends smoothly on $\epsilon$ and $\delta$.  Finally, the minimum value of $\Phi$ directly follows from computation.   
\end{proof}

\begin{remark}
We note here that the above Corollary also applies to the case where the matrix $A$ is negative semi-definite. In particular, if we let $A= -L$ with $L$ a weighted irreducible Laplacian matrix, then $v_1 = {\bf 1}_n/\sqrt{n}$. Thus, an optimal actuator-sensor pair to the dynamics~\eqref{eq:laplaciandynamics} is  $(\sqrt{\nicefrac{\epsilon}{n}}{\bf 1}_n, \sqrt{\nicefrac{\delta}{n}}{\bf 1}_n )$.  
\end{remark}

\section{Conclusions}
We formulate in the paper the joint actuator-sensor design problem for stochastic linear systems. The goal is to optimize the actuator-sensor pair $(b,c)$ so as to minimize a time-averaged quadratic cost function. The closed formula for the cost function $\Phi$ is given in~\eqref{eq:performance}, the dependence of $b$ and $c$ is via the AREs for $K$ and $\Sigma$ (see~\eqref{eq:algebraicriccati}). We argued that $\Phi$ depends only on $B = bb^\top$ and $C = cc^\top$, and derived the corresponding gradient algorithm for minimizing $\Phi$ (see Theorem~\ref{thm:mainresult}) over the solutions space $X^2$. We then characterized the set of equilibria associated with the gradient flow under the assumption that $A$ is negative definite and $|b|$ and $|c|$ are relatively small. In the extreme case where $\epsilon = \delta = 0$,  we illustrated the type of analysis one needs to carry out for computing the set of equilibria, which involves solving the algebraic equations~\eqref{eq:betagammaequations}. The results are summarized in Props. 1-3. Further, we show that there is a unique stable equilibrium of the gradient flow under the above assumption (see Theorem~\ref{thm:iavsspeciallastgameforia??}).  A sketch of the proof  is given, but with computational details of the Hessians omitted. 

The joint sensor-actuator design problem formulated here is far from being completely solved. Nevertheless, we believe that the approach established here can be modified and extended to other cases.  Ongoing research includes the case where $A$ is Hurwitz but not necessarily symmetric, the case where $A$ is positive-definite, and the case where $b$ and $c$ are matrices comprised of multiple actuator/sensor vectors.

\bibliographystyle{unsrt}
\bibliography{paperrefs}
\end{document}